\newcommand{\tm}[1]{\textrm{#1}}
\renewcommand{\P}[1]{\mathbb{P}\left(#1\right)}
\newcommand{\PP}[2]{\mathbb{P}_{#1}\left(#2\right)}
\newcommand{\I}[1]{\mathbb{I}\left(#1\right)}
\newcommand{\E}[1]{\mathbb{E}\left[#1\right]}
\newcommand{\opt}{{\rm{OPT}}}
\newcommand{\alg}{{\rm{ALG}}}
\newcommand{\e}[1]{e^{-#1}}
\newtheorem{theorem}{Theorem}[section]
\newtheorem{lemma}[theorem]{Lemma}
\newtheorem{proposition}[theorem]{Proposition}
\newtheorem{corollary}[theorem]{Corollary}
\newtheorem{remark}[theorem]{Remark}
\newtheorem{observation}[theorem]{Observation}
\newenvironment{proof}{\noindent{\bf Proof}:}{$\hfill \Box$\\}
\def\eopt {\E{\opt}}
\def\ealg {\E{\alg}}
\def\s {y}
\def\S {Y}
\def\t {z}
\def\T {Z}
\begin{document}

\title{\Large Online Stochastic Matching:     Online Actions Based on Offline Statistics}

\author{Vahideh H. Manshadi
    \thanks{ Department of Electrical Engineering, Stanford University, Stanford, CA 94305.
    Email:\protect\url{vahidehh@stanford.edu}.}
\and
Shayan Oveis Gharan%
    \thanks{ Department of Management Science and Engineering, Stanford University, Stanford, CA 94305. Email:\protect\url{{shayan,saberi}@stanford.edu}.}
\and
  Amin Saberi\footnotemark[2]%
}


\maketitle

\pagenumbering{arabic}
\setcounter{page}{1}

\begin{abstract}
We consider the online stochastic matching problem proposed by Feldman et al. \cite{aryanak_stmatching} as a
model of display ad allocation. We are given a bipartite graph; one side of the graph
corresponds to a fixed set of bins and the other side represents the set of possible
ball types. At each time step, a ball is sampled
independently from the given distribution and it needs to be matched upon its arrival to an empty bin.
The goal is to maximize the number of allocations.

We present an online algorithm for this problem with a competitive ratio of $0.702$. Before our result, algorithms with a competitive ratio better than $1-1/e$ were  known under the assumption that the expected number of arriving balls of each type is integral. A key idea of the algorithm
is to  collect statistics about the decisions of the optimum offline solution using Monte Carlo sampling and use those statistics  to guide the decisions of the online algorithm.  We also show that
our algorithm achieves a competitive ratio of $0.705$ when the rates are integral.

On the hardness side, we prove that no online algorithm can have a  competitive ratio better than $0.823$
under the known distribution model (and henceforth under the permutation model). This improves upon the $\frac56$ hardness result proved by Goel and Mehta \cite{aryanak_randominput} for the permutation model.
\end{abstract}


\section{Introduction}
\label{sec:introduction}

We study a natural variation of bipartite matching problem motivated in the context of online advertising: suppose we are given a bipartite graph $G(\S, \T, E)$ where $\S$ is the set of stochastic nodes (or ball types) and $\T$ is the set of non-stochastic nodes (or bins).
At times $t = 1, 2, \cdots b$, a ball of type $\s \in  \S$ is chosen independently at random from a given distribution. The algorithm can assign the ball to at most one of the empty bins that are adjacent to it. Further, each bin can be matched to at most one ball. The goal is to maximize the expected number of non-empty bins at time $b$. We refer to this model as the {\em known distribution model}.

When the balls are chosen by an adversary instead of a random process, Karp, Vazirani, and Vazirani \cite{kvv} gave a simple and elegant randomized algorithm that achieves a competitive ratio of $1-1/e$. We present the first algorithm for this problem that improves the $1-1/e$ competitive ratio for the stochastic version in its general form. Previously, Feldman et al. \cite{aryanak_stmatching} (and later \cite{bahmani}) used a very interesting combinatorial algorithm to show that this is possible when the arrival rate of every ball, that is the expected number of times it appears, is integral (this is also known as the {\em i.i.d. model}). This assumption, even though not very restrictive for the display ad allocation, is somewhat unnatural. For example, when the distribution is uniform, it requires  $b/|\S|$ to be an integer.

One of the key ideas in designing our algorithm is to approximately compute the expected matching used by the optimum offline algorithm and use it  to guide the decisions of the online algorithm. In particular, using Monte Carlo sampling,  one can compute $f_{(\s, \t)}$, the probability that the optimum offline algorithm allocates a ball of type $\s$ to a bin of type $\t$, for every $\s$ and $\t$. Without loss of generality, we can assume $f$ is a fractional matching.

Our first algorithm writes $f$ as a distribution over integral matchings and samples two matchings $M_1$ and $M_2$ from it. Then, in the online phase, it will use these two matchings for allocating the arriving balls to the bins (see Section \ref{sec:nonadaptive}). The analysis of our algorithm is much shorter and simpler than both \cite{aryanak_stmatching,bahmani}. All these algorithms are non-adaptive, in the sense that they decide the allocation of all the balls regardless of the allocation of the bins before they arrive. We present a simple example to show that no non-adaptive algorithm can achieve a competitive ratio better than $1-1/e$ when the arrival rates are non-integral (see Proposition \ref{prop:hardness_fractionalrate}).

The main result of the paper is an {\em adaptive algorithm} that obtains a competitive ratio of $0.702$ for arbitrary rates, and $0.705$ for the i.i.d. model (see Section \ref{sec:adaptive}).
Unlike the non-adaptive algorithms, our adaptive algorithm decides the allocation of each arriving ball based on the current allocation of the bins. In particular, when a ball arrives the algorithm samples
two neighbor bins from a joint distribution and tries to match it to the first bin; if the bin is already matched the algorithm tries the second bin.
To the best of our knowledge, this is the first algorithm that beats the $1-1/e$ ratio in the general form.
The adaptivity of the algorithm imposes a lot of dependencies in the distribution of full bins and because of that our analysis is somewhat intricate.

On the hardness side, we  present an example that gives an upper bound of $0.823$ on the competitive ratio of any deterministic or randomized online algorithm in the known distribution model (see Proposition \ref{prop:best_example}). For analyzing this example, we use the expected size of a maximum matching of a random bipartite graph recently computed by \cite{dgmmpr09,fm09,fk09} in the context of random SAT and cuckoo hashing.

\subsection{Related Work}
\label{subsec:related}

%
%
%


Bipartite matching problems are central in algorithms and combinatorial optimization and arise naturally in several applications such as resource allocation, scheduling, and online advertising.

The online matching problem was first studied by Karp, Vazirani, and Vazirani \cite{kvv} in the adversarial model
where the graph is unknown; when a ball arrives it reveals its incident edges. They proved that a simple randomized
on-line algorithm achieves $(1 - 1/e)$ and this factor is the best possible performance.

More recently, Feldman et al. \cite{aryanak_stmatching} studied the problem under stochastic assumptions.
They assumed that the graph is known but the
sequence of arrivals are $i.i.d.$ samples from a given distribution. Further, they assumed that sampling rates are integral and developed an online
algorithm that beats $(1-1/e)$. They also showed that there is no $1 - o(1)$-approximation algorithm for this
setting. Recently, Bahmani and Kapralov \cite{bahmani} improved the upper and lower bounds of Feldman et al.
to 0.902 and 0.699 respectively  in the same setting. Also, they showed that for $d$-regular graphs, a simple
randomized algorithm achieves a competitive ratio of
$1 - O(1/\sqrt{d})$ \cite{bahmani}.

Goel and Mehta \cite{aryanak_randominput} considered a different stochastic model: they assumed the graph is unknown but the sequence of arrivals is a random permutation. This is known as the {\em random permutation model}, and it is a generalization of the known distribution model. They
showed that a greedy algorithm achieves $(1-1/e)$ factor. Further, they showed that no online algorithm
can achieve competitive ratio better than $\frac{5}{6}$. Since the known distribution model
is a special case of the random permutation model, our hardness result improves their upper-bound to $0.823$. Since the first appearance of this paper, Karande et al. \cite{KMT11}, and Mahdian and Yan \cite{MY11} independently improve the $(1-1/e)$ competitive ratio in the random permutation model to $0.653$ and $0.696$ respectively.

A close line of work to the online matching is the online $b$-matching and the AdWords problem \cite{saberi,hayes_b_matching}. Mehta et al. \cite{saberi} developed a $(1-1/e)$
online algorithm in the adversarial case. Recently, Devanur and Hayes \cite{hayes_b_matching} improved the competitive ratio to $(1-\epsilon)$
in the stochastic case  where the sequence of arrivals is a random permutation or it consists of $i.i.d.$ samples.

\section{Problem Definition}
\label{sec:problemdefinition}

Let $G(\S, \T, E)$ be a bipartite graph where $\S$ is the set of stochastic nodes (or ball types) and $\T$ is the set of non-stochastic nodes (or bins). There is a rate $r_{\s}$ associated to every type of ball $\s \in \S$. The online stochastic matching problem is as follows: at times $t = 1, 2, \cdots b$, a ball of type $\s \in  \S$ is chosen independently and with probability proportional to $r_{\s}$. The algorithm can assign this ball to at most one of the empty bins that are adjacent to it; each bin can be matched to at most one ball. The goal of the algorithm is to maximize the expected number of non-empty bins at time $b$.

Without loss of generality,  we assume that $\sum_{\s\in \S} r_{\s} = b$, thus the expected number of balls of type $\s$ in the sequence is $r_{\s}$. Also, we assume that $r_{\s}\leq 1$; if a node has a rate greater than 1, we can easily split it into a set of identical nodes with rates at most 1.

We will study two classes of algorithms: non-adaptive and adaptive. A non-adaptive algorithm is equivalent to an ordering of the neighbors $N(\s)$ of every node $\s \in \S$. If $\t_1, \t_2, \cdots \t_{|N(\s)|}$ is such an ordering for $\s$, then the $k$-th time a ball of type $\s$ arrives, the algorithm will allocate it to bin $\t_k$  if it is empty. If $k > |N(\s)|$ or $\t_k$ is full then the ball will not be allocated.
 On the other hand, adaptive algorithms can choose the assignment of every ball when it arrives.

We will compare our algorithms to the optimum offline solution. Given the sequence of arrived balls $\omega=(\s_1, \s_2, \ldots,\s_b)$, one can compute the optimum allocation, $\opt(\omega)$, in polynomial time by solving a maximum matching problem. Fix a particular maximum matching algorithm and let $F(\omega): E \rightarrow \{0,1\}$ be the vector indicating which edges are used in the optimum allocation given $\omega$. Clearly, $\opt(\omega)=1^TF(\omega)$ and the competitive ratio of an online algorithm $\alg$ is defined as $\frac{\E{\alg}}{\E{\opt}}$. In our case, both $\alg$ and $\opt$ are concentrated around their expected values, therefore the above competitive ratio is fairly robust (see Feldman et al. \cite{aryanak_stmatching} for a more detailed discussion).

Our algorithms will crucially use the optimum offline solution for making decisions. In particular,
define
\begin{equation} \label{eq:f}
f = \sum_{\omega} F(\omega) \P{\omega},
\end{equation}
where $\P{\omega}$ is the probability of the sequence $\omega=(\s_1, \s_2, \ldots,\s_b)$.
By definition, $f$ is a convex combination of matchings and therefore it is in the convex hull of the matchings of $G$. We will refer to $f$ as the  {\em fractional matching} defined by $\opt$. For each edge $e=(\s,\t) \in E$, $f_e$ is the probability that a ball of type $\s$ is allocated to bin $\t$ by the optimum offline algorithm. Similarly we define the fractional degree of a node to be  $f_{v} = \sum_{e \sim v}f_e$  for $v \in \S \cup \T$.

\begin{proposition}\label{prop:OPT}
The vector $f$ is a fractional matching in $G$. i.e.
\begin{eqnarray}
\label{eq:fmatching_st}
f_{\s}  \leq  r_{\s} \leq 1,~\s \in \S, ~~\tm{and} ~~ f_{\t}  \leq  1,~\t \in \T.
\end{eqnarray}
Moreover, for $e=(\s,\t)$, we have $f_e\leq 1-\e{r_{\s}} + o(1/b).$
\end{proposition}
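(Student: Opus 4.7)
The plan is to deduce every inequality from the deterministic fact that $F(\omega)$ is an integral matching of $G$ for each sample path $\omega$, and then average over $\omega$ using the definition $f = \sum_\omega F(\omega)\P{\omega}$.

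First I would verify the degree constraints in (\ref{eq:fmatching_st}). For a bin $\t \in \T$, any matching uses $\t$ at most once, so $\sum_{e\sim\t} F_e(\omega) \le 1$ pointwise in $\omega$, and taking expectations gives $f_\t \le 1$. For a ball type $\s \in \S$, the same naive bound would only yield $f_\s\le 1$; to get the sharper $f_\s\le r_\s$ I would let $N_\s(\omega)$ denote the number of arrivals of type $\s$ in $\omega$ and observe that, since the matching can use each arrival at most once, $\sum_{e\sim\s} F_e(\omega)\le N_\s(\omega)$. Taking expectations, and noting that each of the $b$ time slots independently draws type $\s$ with probability $r_\s/b$ (because $\sum_{\s'} r_{\s'}=b$), we get $\E{N_\s}=r_\s$, hence $f_\s\le r_\s$. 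The assumption $r_\s\le 1$ then completes the first line.

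Next I would handle the per-edge bound. The key observation is that $F_{(\s,\t)}(\omega)=1$ forces at least one ball of type $\s$ to appear in $\omega$, so
\begin{equation*}
f_{(\s,\t)} \;=\; \sum_\omega F_{(\s,\t)}(\omega)\,\P{\omega} \;\le\; \P{N_\s\ge 1} \;=\; 1-\bigl(1-r_\s/b\bigr)^{b}.
\end{equation*}
From here it is a short asymptotic calculation: since $r_\s\le 1$, the expansion $b\ln(1-r_\s/b)=-r_\s+O(1/b)$ yields $(1-r_\s/b)^b=\e{r_\s}\bigl(1+O(1/b)\bigr)$, and therefore $f_{(\s,\t)}\le 1-\e{r_\s}+O(1/b)$, which matches the stated bound.

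The only real subtlety is the step $\sum_{e\sim\s}F_e(\omega)\le N_\s(\omega)$ for $\s\in\S$: it is tempting to bound this by $1$ since matchings have degree at most one at every vertex, but distinct arrivals of type $\s$ at different time slots can be matched to distinct neighbors of $\s$, so the correct bound really does depend on the random arrival count. Preserving this refinement is what delivers the crucial strengthening $f_\s\le r_\s$ rather than merely $f_\s\le 1$, and this is the bound the algorithmic sections will build on.
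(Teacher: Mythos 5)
Your proposal is correct and follows essentially the same route as the paper: bound $\sum_{e\sim\s}F_e(\omega)$ by the arrival count $N_\s(\omega)$ and $\sum_{e\sim\t}F_e(\omega)$ by $1$, take expectations, and bound $f_e$ by $\P{N_\s\ge 1}=1-(1-r_\s/b)^b$ with the same asymptotic expansion. The only (immaterial) difference is that you write the error term as $O(1/b)$ where the paper writes $o(1/b)$; your version is in fact the more accurate one, and the paper explicitly treats such terms as negligible anyway.
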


\begin{proof}
Given $\omega$, let $N_{\s}(\omega)$ be the number of balls of type $\s$ in $\omega$. Clearly $\sum_{e \sim \s}F_e(\omega) \leq N_{\s}(\omega)$. Taking expectations from both sides results in the first inequality in \eqref{eq:fmatching_st}. Similarly, the second inequality in \eqref{eq:fmatching_st} can be proved by noting that in any instance of the problem, $\t$ can be matched to at most one ball.
Finally, for $e = (\s,\t)$, we have
\begin{eqnarray*}
f_e  \leq  \P{N_{\s}(\omega) \geq 1} =  1 - (1-\frac{r_{\s}}{b})^b \leq  1-e^{-r_{\s}} + o(1/b).
 \end{eqnarray*}
\end{proof}

Throughout the paper, we will assume that $b$ is sufficiently large so that $o(1/b)$ is negligible.
We will need to compute $f_e$ for every edge $e$. Obviously, $f_e$'s can be computed by enumeration in time $O(|\S|^b)$.  It is also easy to see that $\eopt$ and $f(e)$ for all $e \in E$, can be approximated with great accuracy using Monte Carlo method.  \opt~is an integral random variable which is in interval $[0,b]$, hence its variance is upper-bounded by $b^2$. Therefore, $\E{\opt}$ can be estimated with error of $o(1/b)$, by averaging over $O(b^3)$ independent samples of the process. A similar argument shows that with $O(|E|^2 b^4)$ samples of
 $\omega$ in equation (\ref{eq:f}), with high probability, one can compute the vector $f$ with accuracy within $o(1/b|E|)$. In the rest of the paper, for simplicity of notation, we will assume that we have estimated $f$ accurately and ignore $o(\cdot)$ terms.

 Since $f$ is a fractional matching, standard algorithmic versions of Caratheodory's theorem (see e.g. \cite[Theorem 6.5.11]{GLS88}) say
that, in polynomial time, we can decompose a feasible solution in the bipartite matching polytope
into a convex combination of polynomially many bipartite matchings.
More specifically, we obtain the following:

\begin{corollary}
\label{cor:marginal_dist}
It is possible to efficiently and explicitly construct (and sample from) a distribution $\mu$ on the set of matchings in $G$ such that
\[
\sum_{M, ~ e\in M} \mu(M) = f_e, ~ \forall e \in E
\]
\end{corollary}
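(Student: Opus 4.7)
The plan is to invoke integrality of the bipartite matching polytope together with the algorithmic Caratheodory theorem already cited from \cite{GLS88}. First I would observe that, by Proposition \ref{prop:OPT}, the vector $f \in \mathbb{R}^E_{\ge 0}$ satisfies $f_{\s} \le 1$ for every $\s \in \S$ and $f_{\t} \le 1$ for every $\t \in \T$. Hence $f$ lies in the bipartite matching polytope of $G$, i.e.\ the convex hull of the characteristic vectors $\mathbf{1}_M$ of matchings $M \subseteq E$ (including the empty matching). By König's theorem this polytope is integral, so $f$ admits a representation
\[
f \;=\; \sum_{M} \lambda_M \, \mathbf{1}_M, \qquad \lambda_M \ge 0, \quad \sum_M \lambda_M = 1,
\]
where the sum runs over matchings of $G$.

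Next, I would apply the algorithmic version of Caratheodory's theorem (\cite[Theorem 6.5.11]{GLS88}) to produce such a decomposition explicitly in polynomial time and using at most $|E|+1$ matchings. This step requires a polynomial-time separation oracle for the bipartite matching polytope, which is classical: separation reduces to a maximum weight bipartite matching computation. Setting $\mu(M) := \lambda_M$ yields a distribution on matchings satisfying
\[
\sum_{M:\, e \in M} \mu(M) \;=\; \sum_{M} \lambda_M \, (\mathbf{1}_M)_e \;=\; f_e
\]
for every edge $e \in E$, which is exactly the claim.

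The only subtlety worth flagging is that $f$ need not be a perfect fractional matching, so the decomposition must allow matchings of arbitrary (possibly non-maximum) size; this is handled automatically by working in the full bipartite matching polytope rather than the perfect matching polytope, and by adding a slack so that $\sum_M \lambda_M = 1$ (any slack mass can be placed on the empty matching). Sampling from $\mu$ is then trivial since the support has polynomial size. I do not anticipate a genuine obstacle: the statement is essentially a repackaging of a standard polyhedral fact, and the approximation error in our estimate of $f$ from Monte Carlo sampling can be absorbed into the $o(\cdot)$ terms already ignored in the preceding discussion.
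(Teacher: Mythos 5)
Your proposal is correct and follows essentially the same route as the paper: membership of $f$ in the bipartite matching polytope plus the algorithmic Caratheodory theorem of \cite[Theorem 6.5.11]{GLS88} to obtain an explicit convex combination of polynomially many matchings, which is then read as the distribution $\mu$. The only (harmless) difference is that you re-derive membership from the degree constraints via integrality of the bipartite matching polytope, whereas the paper gets it for free since $f$ is by definition the convex combination $\sum_{\omega} F(\omega)\P{\omega}$ of integral matchings.
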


%
%
%
%

\section{A Non-adaptive algorithm}
\label{sec:nonadaptive}
In this section, we will analyze a simple non-adaptive algorithm for the special case where all rates are one, i.e., $r_{\s} = 1, \forall \s \in \S$.
This is the setting studied in Feldman et al. \cite{aryanak_stmatching}. Our algorithm and its analysis is simpler and more intuitive than \cite{aryanak_stmatching}. It also gives a slightly better competitive ratio.

Our non-adaptive algorithm has some similarities with the online algorithm that Feldman et al. propose \cite{aryanak_stmatching}. Both algorithms start by computing two matchings $M_1$ and $M_2$ offline; we use the first matching, only for the first arrived ball of each type and the second one only for the second arrivals. In particular, when the first ball of type $\s$ arrives it will be allocated to the bin matched to $\s$ in $M_1$, and when the second ball arrives, we will allocate it via $M_2$. If the corresponding bins are already full, the balls will be dropped. Note that the probability that there are more than two balls of each type $\s$ in the sequence of arrivals is very small.

On the other hand, we use a different method from \cite{aryanak_stmatching} to construct these matchings. Feldman et al.
find $M_1$ and $M_2$ by decomposing
the solution of a maximum 2-flow of  $G$ into two disjoint matchings (since all the rates are one, the
expected graph is simply $G$). However, we will sample our matchings from the distribution $\mu$ defined by the optimum solution $f$.

\begin{algorithm}
\caption{The Online Non-adaptive Algorithm}
\label{alg:samplematching}
\begin{algorithmic}[1]
\item[{\bf Offline Phase:}]
\STATE Compute the fractional matching $f$, and the distribution $\mu$ using Corollary \ref{cor:marginal_dist}.
\STATE Sample two matchings $M_1$ and $M_2$ from $\mu$ independently; set $M_1$ ($M_2$) to be the first (second)
priority matching.
\item[{\bf Online Phase:}]
\STATE When the first ball of type $\s$ arrives, allocate it through the first priority matching, $M_1$.
\STATE Similarly, when a ball of type $\s$ arrives for the second time, allocate it through the second priority matching, $M_2$.
\end{algorithmic}
\end{algorithm}

The outline of the algorithm is presented in Algorithm \ref{alg:samplematching}.
In the rest of this section, we analyze Algorithm \ref{alg:samplematching}, and show that its approximation ratio is 0.684. Let $X_{\t}$ be the random variable indicating the event that bin $\t$ is matched with a ball during the run of the algorithm. We analyze the competitive ratio of the algorithm by comparing $\E{X_{\t}}$ with $f_{\t}$:
\begin{eqnarray*}
\frac{\ealg}{\eopt} =  \frac{\sum_{\t\in \T} \E{X_{\t}}}{\sum_{\t\in \T} f_{\t}} \geq \min_{\t\in \T} \frac{\E{X_{\t}}}{f_{\t}}
\end{eqnarray*}

Consider any  $\t\in \T$, and  with a slight abuse of notation let $M_1(\t)$ denote the stochastic node matched to it in $M_1$. More precisely, if $(\s, \t) \in M_1$, define $M_1(\t) = \{\s\}$, and if $\t$ is not saturated in $M_1$, define $M_1(\t) = \emptyset$; similarly define $M_2(\t)$. Note
that $\t$ is saturated by $M_1$ (or $M_2$) with probability $f_{\t}$, but if $M_1(\t) = M_2(\t)$, bin $\t$
will only be used for the first arrived ball and effectively it is not saturated by $M_2$.
Given $M_1$ and $M_2$, $\E{X_{\t} | M_1, M_2}$ can be computed similar to \cite[section 4.2.2]{aryanak_stmatching} by considering the following cases:

\begin{eqnarray}
\E{X_{\t} |M_1, M_2} =
\label{eq:nonadapt_allocation}
\begin{cases}
0 & \tm{ if $M_1(\t)=M_2(\t)=\emptyset$} \\
1-1/e & \tm{ if $M_1(\t)\neq \emptyset, \{ M_1(\t)=M_2(\t)\}$} \\
1-1/e & \tm{ if $M_1(\t)\neq \emptyset, M_2(\t)=\emptyset $} \\
1-2/e & \tm{ if $M_1(\t)= \emptyset, M_2(\t)\neq \emptyset$} \\
1-2/e^2 & \tm{ if $M_1(\t)\neq \emptyset, M_2(\t)\neq \emptyset, M_1(\t)\neq M_2(\t)$}
\end{cases}
\end{eqnarray}

By substituting \eqref{eq:nonadapt_allocation} into $\E{X_{\t}}$ we get:
\begin{eqnarray*}
\E{X_{\t}}& = & (1-1/e) \sum_{e\sim \t} f_e (1-f_{\t} + f_e) + (1-2/e) \sum_{e\sim \t}  f_e(1-f_{\t}) 
+ (1-2/e^2) \sum_{e,e'\sim \t,~ e\neq e'} f_e f_{e'} \\
&= & ~f_{\t}(2-3/e)  - f_{\t}^2 (1+2/e^2-3/e)  - (1/e-2/e^2) \sum_{e\sim \t} f_e^2
\end{eqnarray*}

The last equality can be derived by algebraic manipulation and noting that $\sum_{e\sim \t} f_e=f_{\t}$.
It remains to prove a lower bound on the value of the last equation:

\begin{lemma}
\label{lem:NA_optimize}
In any graph $G=(\S,\T,E)$, if $f$ is the corresponding vector of the optimum solution, we have
\begin{equation}
\label{eq:NA_optimize}
\frac{\E{X_{\t}}}{f_{\t}} =  (2-3/e) - (1+2/e^2-3/e) f_{\t} - (1/e-2/e^2) \frac{\sum_{e\sim \t} f_e^2}{f_{\t}} 
 \geq  0.684
\end{equation}
\end{lemma}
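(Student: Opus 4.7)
The plan is to view the right-hand side of (\ref{eq:NA_optimize}) as a function of the local data $(f_e)_{e\sim\t}$ and minimize it over the feasible region dictated by Proposition~\ref{prop:OPT}, namely $f_e \in [0, 1-1/e]$ for each incident edge and $f_\t = \sum_{e\sim\t} f_e \leq 1$. Since the coefficient $(1/e - 2/e^2)$ is strictly positive, a lower bound on the whole expression reduces to an \emph{upper} bound on $\sum_{e\sim\t} f_e^2$ at each fixed value of $f_\t$; after that one is left with a one-dimensional optimization in $f_\t \in [0,1]$.

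To carry out the first step I would use that $\sum_e f_e^2$ is a convex function of $(f_e)$, so its maximum over the polytope $\{(f_e) : 0 \le f_e \le 1-1/e,\ \sum_e f_e = f_\t\}$ is attained at an extreme point; such an extreme point has every coordinate at either $0$ or $1-1/e$ except possibly one slack coordinate. This yields two regimes: when $f_\t \le 1-1/e$ all the mass sits on a single edge and $\sum_e f_e^2 \le f_\t^2$, while when $f_\t \in (1-1/e, 1]$ the tight upper bound is $\sum_e f_e^2 \le (1-1/e)^2 + (f_\t - (1-1/e))^2$, realized by a configuration with one edge at $1-1/e$ and one at $f_\t - (1-1/e)$.

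The last step is to substitute each of these bounds into (\ref{eq:NA_optimize}) and minimize in $f_\t$ on the respective sub-interval. On $[0, 1-1/e]$ the resulting lower bound is affine in $f_\t$ with negative slope $-(1-2/e)$, so its minimum is at $f_\t = 1-1/e$ and collapses neatly to $1 - 2/e^2$. On $(1-1/e, 1]$ the lower bound takes the shape $\alpha - \beta f_\t - \gamma/f_\t$ for explicit positive constants $\beta, \gamma$, and its derivative is a simple function of $1/f_\t^2$ whose sign is easy to check on the interval; this reduces the problem once more to evaluating the bound at a single endpoint. The main obstacle is precisely this endpoint check: expanding the resulting closed-form expression in powers of $1/e$ (arising from the worst-case split $f_\t = 1 = (1-1/e) + 1/e$) and verifying it exceeds $0.684$. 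This is also where the constant in the statement is pinned down, since it is the worst among the two regime minima.
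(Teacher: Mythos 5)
Your overall strategy --- fix $f_{\t}$, maximize $\sum_{e\sim \t} f_e^2$ over the admissible edge values, then minimize over $f_{\t}$ --- is the same as the paper's, but the feasible region you optimize over is too weak, and this is a genuine gap: the constant $0.684$ cannot be obtained from it. Proposition \ref{prop:OPT} gives the per-edge cap $f_e\le 1-\e{1}$, but the vector $f$ coming from $\opt$ satisfies a whole family of subset constraints that your relaxation drops: for any $k$ neighbors $\s_1,\dots,\s_k$ of $\t$, the sum $\sum_{i=1}^{k} f_{(\s_i,\t)}$ is at most the probability that at least one ball of these $k$ types ever arrives, i.e.\ at most $1-\e{k}+o(1/b)$ (all rates are $1$ in this section). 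The paper's proof explicitly uses the $k=2$ case (``the summation of the probabilities of any 2 edges incident to $\t$ is at most $1-\e{2}$'') and continues in the same way for larger $k$; these constraints force the most unbalanced configuration at $f_{\t}=1$ to cascade as $f_{e_1}=1-\e{1}$, $f_{e_2}=\e{1}-\e{2}$, $f_{e_3}=\e{2}-\e{3},\dots$, giving
\begin{equation*}
\sum_{e\sim\t} f_e^2 \;\le\; \frac{(1-\e{1})^2}{1-\e{2}} \;=\; \frac{e-1}{e+1}\;\approx\;0.462,
\qquad\text{hence}\qquad
1-\tfrac{2}{e^2}-\bigl(\tfrac{1}{e}-\tfrac{2}{e^2}\bigr)\tfrac{e-1}{e+1}\;\approx\;0.684 .
\end{equation*}

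Under your polytope $\{0\le f_e\le 1-\e{1},\ \sum_{e\sim\t} f_e=f_{\t}\}$, the extreme configuration at $f_{\t}=1$ is one edge at $1-\e{1}$ and one at $\e{1}$; but these two edges sum to $1>1-\e{2}$, so this point is infeasible for the true $f$, and plugging it in gives $\sum_{e\sim\t} f_e^2 = 1-2/e+2/e^2\approx 0.535$ and a final value of $1-2/e^2-(1/e-2/e^2)(1-2/e+2/e^2)\approx 0.677$. So the endpoint evaluation you defer to at the end would in fact come out \emph{below} $0.684$, and your argument can only certify a weaker constant. Your first regime ($f_{\t}\le 1-\e{1}$, with the bound collapsing to $1-2/e^2$) and the convexity/extreme-point reasoning are fine as far as they go; to repair the proof you must add the subset constraints above to the polytope before identifying the extremal configurations (exactly as the paper does), and then the reduction to $f_{\t}=1$ (the paper argues this by adding a dummy neighbor with infinitesimal $f_e$) and a final numerical evaluation give the stated $0.684$.
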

\begin{proof}
The proof of this lemma is mainly algebraic. 
Let us first fix $f_{\t}$ and find the minimum of the LHS in terms of $f_{\t}$. For any $f_{\t}$, the LHS is minimized when $\sum_{e\sim \t} f_e^2$ is maximized.
Note that $\sum_{e\sim \t} f_e=f_{\t}$, and thus to maximize the $\sum_{e\sim \t} f_e^2$, we need to consider the most ``unbalanced'' edge probabilities that are
consistent with the properties of fractional matching $f$. By proposition \ref{prop:OPT}, $f_e \leq 1 - \e{1}$ for each $e\sim \t$, thus for $f_{\t} \leq 1 - \e{1}$,
the term $\sum_{e\sim \t} f_e^2$ is maximized when we have only one edge with nonzero probability.
Similarly we can show that the summation of the probabilities of any 2 edges incident to $\t$ is at most $1-\e{2}$, thus if $1 - \e{1} \leq f_{\t} \leq 1 - \e{2}$,
the term $\sum_{e\sim \t} f_e^2$ is maximized when we have two edges with nonzero probability; one edge with probability $1 - \e{1}$  and one with probability
$f_{\t} - (1 - \e{1})$. Similarly we can proceed to compute the maximum of $\sum_{e\sim \t} f_e^2$ in terms of $f_{\t}$ for all $0 \leq f_{\t} \leq 1$.

The only remaining task is to find the value $f_{\t}$ that minimizes the LHS of \eqref{eq:NA_optimize}.
Intuitively, the LHS is minimized when $f_z=1$. In particular, if $f_{\t} < 1$, we may add a dummy node $\s$ to $\S$, and connect it to $\t$ by an edge $e=(\s,\t)$ with very small probability, i.e. $f_e=\epsilon$. It is easy to see that this can only decrease the LHS.
Also, one can numerically confirm that the LHS of \eqref{eq:NA_optimize}
attains its minimum at $f_{\t} = 1$ with value $0.684$.

%
\end{proof}
%
\begin{theorem}
\label{thm:nonadaptive}
Assuming all the rates are 1, the solution of Algorithm \ref{alg:samplematching} is within 0.684 of the optimum offline solution.
\end{theorem}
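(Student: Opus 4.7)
The plan is to assemble the per-bin calculation from the main text with Lemma \ref{lem:NA_optimize}. First I would note that by linearity of expectation $\ealg = \sum_{\t \in \T} \E{X_{\t}}$, and that by construction of $f$ we have $\eopt = \sum_{e \in E} f_e = \sum_{\t \in \T} f_{\t}$, because $f_e = \E{F_e(\omega)}$ and $\opt(\omega) = \sum_e F_e(\omega)$.

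Next I would invoke the elementary bound $\frac{\sum_\t a_\t}{\sum_\t b_\t} \geq \min_\t \frac{a_\t}{b_\t}$ (valid when all $b_\t > 0$; bins with $f_{\t} = 0$ contribute nothing on either side and may be discarded). Applying this gives
\[
\frac{\ealg}{\eopt} \;\geq\; \min_{\t \in \T} \frac{\E{X_{\t}}}{f_{\t}}.
\]

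Finally, the expression for $\E{X_{\t}}$ computed in the discussion preceding Lemma \ref{lem:NA_optimize} (obtained by case analysis of $M_1(\t), M_2(\t)$ using \eqref{eq:nonadapt_allocation} together with $\sum_{e\sim\t} f_e = f_{\t}$) shows that $\E{X_{\t}}/f_{\t}$ equals the left-hand side of \eqref{eq:NA_optimize}. Lemma \ref{lem:NA_optimize} then bounds this quantity below by $0.684$, which yields the theorem.

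The only potentially non-routine step is the identification $\eopt = \sum_\t f_\t$, but this is immediate from \eqref{eq:f}; the rest is just combining already-established ingredients, so I would not anticipate any real obstacle in this proof.
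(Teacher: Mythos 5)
Your proposal is correct and follows essentially the same route as the paper: the paper's argument for Theorem \ref{thm:nonadaptive} is exactly the per-bin decomposition $\ealg=\sum_{\t}\E{X_{\t}}$, $\eopt=\sum_{\t}f_{\t}$, the mediant bound $\frac{\ealg}{\eopt}\geq\min_{\t}\frac{\E{X_{\t}}}{f_{\t}}$, the case analysis \eqref{eq:nonadapt_allocation} giving the closed form for $\E{X_{\t}}$, and then Lemma \ref{lem:NA_optimize}. Your explicit justification of $\eopt=\sum_{\t}f_{\t}$ and of discarding bins with $f_{\t}=0$ only makes precise steps the paper treats as immediate.
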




\section{The Adaptive Algorithm}
\label{sec:adaptive}

In the analysis of the non-adaptive algorithm presented in the previous section, we assumed that the arrival rates of all stochastic nodes are integral and in particular, they are at least one. This is a crucial assumption. If the rates $r_{\s}$'s are not bounded from below, the probability of receiving a second ball of the same type can become arbitrary low and the competitive ratio of the algorithm can get very close to $1-1/e$. This is the case for all non-adaptive algorithms: In Proposition \ref{prop:hardness_fractionalrate} we show that no non-adaptive (even randomized)
algorithm can achieve a competitive ratio better than $1-1/e$ when the sampling rates are not necessarily integral.

%

\newcommand{\fp}[2]{z_{1,#1}(#2)}
\renewcommand{\sp}[2]{z_{2,#1}(#2)}

In this section, we will analyze a simple {\em adaptive} algorithm that will have a better performance for arbitrary rates. The algorithm is very simple: when a ball of type $y$ arrives, it samples two neighboring bins $z_1$ and $z_2$ from a joint distribution. If $z_1$ is empty then $y$ is matched to $z_1$. Otherwise, the algorithm will try $z_2$ and match $y$ to it if it is empty.

The joint distribution from which $z_1$ and $z_2$ are chosen, is determined in advance for every ball type $y$ and it has the following properties: (i) The probability that $z_1$ is equal to $z$ is equal to $f_{(y,z)}$. The same is true for $z_2$. Recall that rates are normalized such that $\sum_{\s\in\S} r(\s)=b$ and thus $f$ is a fractional matching. (ii) The joint distribution is such that the probability of $z_1 = z_2$ is minimized. Note that such a joint probability maximizes the possibility that a ball tries a second different bin in case the first bin that it tries is full.
In what follows, we will present one joint distribution with these properties.

Suppose $(\s,\t_1),\ldots,(\s,\t_k)$ are the edges incident to $\s$, and without loss of generality assume that $f_{(\s,\t_1)} \geq \ldots \geq f_{(\s,\t_k)}$. Also define a dummy edge $(\s,\t_{k+1})$ that is connected to a dummy non-stochastic node $\t_{k+1}$, with
$f_{(\s,\t_{k+1})} = r_{\s} - f_{\s}$. Note that $f_{(\s,\t_{k+1})}$ is the probability that $\opt$ drops a ball of type $\s$.
We will construct two different partitions of the interval $I_{\s} = [0,r_{\s}]$. Specifically, partitions $\mathcal{I}_{\s}$ and $\mathcal{J}_{\s}$ are defined as follows:

\begin{itemize}
\item Partition $\mathcal{I}_{\s}$: let $I_{(\s,\t_1)} = [0,f_{(\s,\t_1)}]$; similarly let $I_{(\s,\t_{l})} = [\sum_{j=1}^{l-1} f_{(y,\t_j)}, \sum_{j=1}^{l} f_{(y,\t_j)}]$, $ 2 \leq l \leq k+1$.
\item Partition $\mathcal{J}_{\s}$: let $J_{(\s,\t_1)}= [r_{\s}-f_{(\s,\t_1)},r_{\s}]$, $J_{(\s,\t_2)} = [0, f_{(\s,\t_2)}]$, and similarly
$J_{(\s,\t_l)} = [\sum_{j=2}^{l-1}f_{(\s,\t_j)}, \sum_{j=2}^{l} f_{(\s,\t_j)}]$, $ 3 \leq l \leq k+1$.
\end{itemize}

Note that the second partition is obtained by shifting the subintervals of $\mathcal{I}_{\s}$ to the
left by $f_{(\s,\t_1)}$. Figure \ref{fig:shift_exmaple} illustrates the partitions through a simple example. Having $\mathcal{I}_{\s}$ and $\mathcal{J}_{\s}$, the distribution is defined as follows: choose a number $x$ uniformly at random from $[0, r_y ]$, define $\fp{\s}{x}$ to be $\t$ if $x \in I_{(\s,\t)}$; similarly define $\sp{\s}{x}$ to be $\t'$ if $x \in J_{(\s,\t')}$. It is easy to see that this joint distribution has property (i). Also, note that the second partition $\mathcal{J}_{\s}$ has the minimum possible overlap with the first one which implies that the resulting joint probability has property (ii), i.e., for each stochastic node $y$, the probability that $\fp{y}{.} = \sp{y}{.}$ is minimized. Further, if all $f_{(y,z)}$'s are less than $\frac{1}{2} r_y$, the probability of $\fp{y}{.} = \sp{y}{.}$ is equal to zero.

\begin{observation}
\label{obs:priority2_slots}
For stochastic node $\s$, suppose $(y,z^*)$ is the edge with the maximum probability,
i.e. $f_{(y,z^*)} \geq f_{(y,z)}$, $\forall z \sim \s$. If $f_{(y,z^*)} < \frac{1}{2}r_{\s}$
then $\fp{y}{x} \neq \sp{y}{x}$, for all $x \in [0,r_{\s}]$. Otherwise, $\fp{y}{x} \neq \sp{y}{x}$
only for $x \in [r_{\s}-f_{(y,z^*)}, f_{(y,z^*)}]$.
\end{observation}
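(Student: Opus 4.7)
The plan is to reduce the question to a direct comparison of the two partitions and then perform a short interval calculation. By construction $\fp{y}{x} = \sp{y}{x} = t$ precisely when $x \in I_{(y,t)} \cap J_{(y,t)}$, so the set of $x$ on which the two samples agree is the disjoint union $\bigcup_t I_{(y,t)} \cap J_{(y,t)}$. I will show that only the block corresponding to $z^* = \t_1$ can contribute to this union, and then compute the resulting overlap explicitly.

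For any $l \geq 2$, the definitions give that $I_{(y,\t_l)}$ begins at $\sum_{j=1}^{l-1} f_{(y,\t_j)}$, while $J_{(y,\t_l)}$ ends at $\sum_{j=2}^{l} f_{(y,\t_j)}$. The former minus the latter equals $f_{(y,\t_1)} - f_{(y,\t_l)}$, which is non-negative because the edge probabilities were sorted in decreasing order. Hence $J_{(y,\t_l)}$ lies weakly to the left of $I_{(y,\t_l)}$, and the two can intersect only on a set of measure zero. This is exactly the point of shifting by $f_{(y,\t_1)}$: it is the largest shift that still moves every non-maximum block fully past its original location.

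It remains to analyze $l = 1$, where $\t_1 = z^*$. Here $I_{(y,z^*)} = [0, f_{(y,z^*)}]$ and $J_{(y,z^*)} = [r_\s - f_{(y,z^*)}, r_\s]$. If $f_{(y,z^*)} < r_\s/2$, these two intervals are disjoint, so combined with the previous paragraph this gives $\fp{y}{x} \neq \sp{y}{x}$ for every $x \in [0, r_\s]$, as claimed in the first case. If $f_{(y,z^*)} \geq r_\s/2$, the two intervals intersect in exactly $[r_\s - f_{(y,z^*)}, f_{(y,z^*)}]$, and by the previous paragraph this is the unique (up to measure zero) location where the two samples coincide, yielding the second case.

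The argument is essentially bookkeeping; the only substantive point is noticing that the sorted order together with the specific shift amount $f_{(y,\t_1)}$ conspires to annihilate every potential overlap between corresponding intervals except the one involving $z^*$ itself. So I do not expect any real obstacle here beyond writing the arithmetic carefully.
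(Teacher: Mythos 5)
Your write-up supplies exactly the argument the paper leaves implicit (the observation is stated without proof, as a direct consequence of the construction of $\mathcal{I}_{\s}$ and $\mathcal{J}_{\s}$), and the core bookkeeping is right: coincidence of the two samples happens precisely on $\bigcup_{\t} I_{(\s,\t)}\cap J_{(\s,\t)}$, the computation for the block of $\t_1=z^*$ is correct, and you read the second case the intended way (the samples \emph{coincide} exactly on $[r_{\s}-f_{(\s,\t_1)},f_{(\s,\t_1)}]$ and differ off it), which is what is actually used later via $q_{\t}$ in Lemma~\ref{lem:q_z}, even though the observation's literal wording garbles the direction.

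The one step that is not justified as written is the blanket claim for ``any $l\ge 2$.'' The gap between the left endpoint of $I_{(\s,\t_l)}$ and the right endpoint of $J_{(\s,\t_l)}$ is indeed $f_{(\s,\t_1)}-f_{(\s,\t_l)}$, but for the dummy index $l=k+1$ the value $f_{(\s,\t_{k+1})}=r_{\s}-f_{\s}$ was never included in the decreasing sort, so this difference can be negative. When the drop probability exceeds $f_{(\s,\t_1)}$ the dummy block self-overlaps: for instance with a single real edge, $f_{(\s,\t_1)}=0.1$ and $r_{\s}=1$, one gets $\fp{\s}{x}=\sp{\s}{x}=\t_{k+1}$ for all $x\in[0.1,0.9]$, even though $f_{(\s,\t_1)}<\tfrac12 r_{\s}$. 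So your argument establishes the claim only for coincidence at \emph{real} bins; to make it airtight you should either state the observation for real bins only (which suffices for everything downstream, since $q_{\t}$ ranges over $\t\in\T$ and coinciding at the dummy just means the ball is dropped under both priorities) or handle the block $l=k+1$ separately rather than appeal to the sorted order. In fairness, the paper's own statement glosses over the same corner case.
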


\begin{figure*}
\centering
    \psfrag{r_s}{\textcolor{blue}{\scriptsize{$r_{\s} = 1$}}}
    \psfrag{f_e1}{\textcolor{black}{\scriptsize{$f_{e_1} = 0.5$}}}
    \psfrag{f_e2}{\textcolor{green}{\scriptsize{$f_{e_2} = 0.2$}}}
    \psfrag{f_e3}{\textcolor{yellow}{\scriptsize{$f_{e_3} = 0.2$}}}
    \psfrag{t_1}{\textcolor{black}{\scriptsize{$\t_1$}}}
    \psfrag{t_2}{\textcolor{green}{\scriptsize{$\t_2$}}}
    \psfrag{t_3}{\textcolor{yellow}{\scriptsize{$\t_3$}}}
    \psfrag{I1}{\textcolor{black}{\scriptsize{$I_{e_1}$}}}
    \psfrag{I2}{\textcolor{green}{\scriptsize{$I_{e_2}$}}}
    \psfrag{I3}{\textcolor{yellow}{\scriptsize{$I_{e_3}$}}}
    \psfrag{I4}{\textcolor{red}{\scriptsize{$I_{e_4}$}}}
    \psfrag{Ip1}{\textcolor{black}{\scriptsize{$J_{e_1}$}}}
    \psfrag{Ip2}{\textcolor{green}{\scriptsize{$J_{e_2}$}}}
    \psfrag{Ip3}{\textcolor{yellow}{\scriptsize{$J_{e_3}$}}}
    \psfrag{Ip4}{\textcolor{red}{\scriptsize{$J_{e_4}$}}}
	\includegraphics[height=.2\textheight,clip]{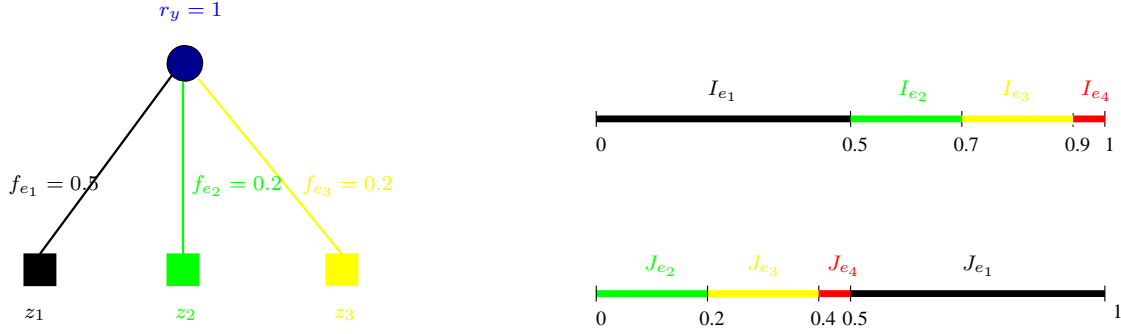}
\caption{Illustration of partitions $\mathcal{I}_{\s}$ and $\mathcal{J}_{\s}$ for node $\s$ with edges $e_1$, $e_2$, and $e_3$}
\label{fig:shift_exmaple}
\end{figure*}

%
%
%
%
%
%
%
%

{\noindent The outline of the algorithm is presented in Algorithm \ref{alg:adapt}.}

\begin{algorithm}
\caption{Online Adaptive Algorithm}
\label{alg:adapt}
\begin{algorithmic}[1]
\item[{\bf Offline Phase:}]
\STATE Compute the fractional matching $f$.
\STATE For each $\s \in \S$ and $x \in [0,r_{\s}]$, construct the functions $\fp{y}{\cdot}$ and $\sp{y}{\cdot}$ by defining the corresponding partitions $\mathcal{I}_{\s}$ and $\mathcal{J}_{\s}$.
\item[{\bf Online Phase:}]
\STATE If a ball of type $\s \in \S$ arrives, choose a number $x$ uniformly at random from interval $[0,r_{\s}]$.
\STATE Match the ball with $\fp{\s}{x}$;
\STATE If $\fp{\s}{x}$ is full, match the ball with $\sp{\s}{x}$;
\end{algorithmic}
\end{algorithm}

\begin{theorem}
\label{thm:adap}
For any graph $G$ and arbitrary set of rates $\{r_{\s},~ \s \in \S\}$, the competitive ratio of Algorithm \ref{alg:adapt} is at least $0.702$.
\end{theorem}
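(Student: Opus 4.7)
The plan is to reduce the analysis to a per-bin lower bound, just as in the non-adaptive case: since $\E{\alg}/\E{\opt} = \sum_\t \E{X_\t}/\sum_\t f_\t \geq \min_\t \E{X_\t}/f_\t$, it is enough to show $\E{X_\t} \geq 0.702\cdot f_\t$ for every bin $\t$, where $X_\t$ is the indicator that $\t$ ends matched. Fix a bin $\t$. I would separate the arrivals that can fill $\t$ into a \emph{first-priority} stream (a ball of type $\s$ arrives and samples $x$ with $\fp{\s}{x}=\t$) and a \emph{second-priority} stream (the sample satisfies $\sp{\s}{x}=\t$ and $\fp{\s}{x}=\t'\neq \t$). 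By construction of $\mathcal{I}_\s$ and $\mathcal{J}_\s$, the first-priority stream has per-step probability $f_\t/b$ and always fills $\t$ when $\t$ is empty; for each $\t'\neq \t$ the corresponding sub-stream has per-step probability $\beta(\t',\t)/b$, where $\beta(\t',\t) = \sum_\s |I_{(\s,\t')}\cap J_{(\s,\t)}|$, and fills $\t$ only when $\t$ is empty \emph{and} $\t'$ is already full. Observation~\ref{obs:priority2_slots} controls the self-overlap $\beta(\t,\t)$, so $\sum_{\t'\neq\t}\beta(\t',\t)$ is nearly equal to $f_\t$.

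Passing to the continuous scale $\tau = t/b \in [0,1]$, let $p_\tau(\t) = \P{\t \text{ is full at } \tau}$. The rate at which $\t$ fills, given it is empty, is at least
\[
f_\t \;+\; \sum_{\t'\neq\t} \beta(\t',\t)\cdot \P{\t' \text{ full at }\tau \,\big|\, \t \text{ empty at }\tau}.
\]
The core lower bound is $\P{\t'\text{ full at }\tau}\geq 1-\e{f_{\t'}\tau}$: any first-priority attempt at $\t'$ before time $\tau$ already guarantees $\t'$ is full at $\tau$, and these attempts form a stream of rate $f_{\t'}$. Integrating the resulting differential inequality yields an explicit lower bound on $p_1(\t)$ in terms of $f_\t$, the $f_{\t'}$'s and the $\beta(\t',\t)$'s. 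The final step is to minimize this quantity divided by $f_\t$ over all feasible configurations; by the same convexity/unbalanced-edge reasoning as in Lemma~\ref{lem:NA_optimize}, together with Proposition~\ref{prop:OPT}'s bound $f_{(\s,\t)}\leq 1-\e{r_\s}$, the infimum is attained at an extremal configuration whose value is verified numerically to exceed $0.702$ (and $0.705$ when all $r_\s=1$, using the sharper bound $f_{(\s,\t)}\leq 1-1/e$).

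The main obstacle will be justifying the lower bound on $\P{\t'\text{ full at }\tau \mid \t\text{ empty}}$ in spite of the strong dependencies introduced by the algorithm's adaptivity: $\t'$ and $\t$ share the same history of ball arrivals, and a first-priority hit at $\t'$ via a ball of type $\s$ may be ``competing'' with a potential hit at $\t$ via the same $\s$. The intended resolution is a coupling/monotonicity argument: the event ``some first-priority attempt hit $\t'$ by time $\tau$'' depends only on which i.i.d.\ samples $(\s_t,x_t)$ fell in the sets $I_{(\s,\t')}$, which are pointwise disjoint from the sets $I_{(\s,\t)}$ driving $\t$'s first-priority fills, so conditioning on ``$\t$ empty'' rules out only first-priority hits on $\t$ and by monotonicity does not decrease the marginal probability that $\t'$ has been hit. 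Making this decoupling rigorous, and carefully propagating the $o(1/b)$ discrete-to-continuous approximation errors through the ODE, is the delicate part; granted that, the final optimization is a routine (if tedious) calculation analogous to Lemma~\ref{lem:NA_optimize}.
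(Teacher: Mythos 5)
There is a genuine gap, and it is exactly the obstacle the paper flags before Lemma \ref{lem:priority_two_approx}. Your reduction ``it is enough to show $\E{X_\t}\geq 0.702\, f_\t$ for every bin $\t$'' cannot be carried out: the second-priority contribution to a fixed bin $\t$ is governed by the fractional degrees of the bins at distance two from $\t$, and nothing in Proposition \ref{prop:OPT} (the only constraints your ``feasible configurations'' impose) prevents all of those bins $\t'$ from having $f_{\t'}=o(1)$ while the overlap masses $\beta(\t',\t)$ still sum to nearly $q_\t$ (many bins, each contributing a tiny interval). In that configuration each conditional fill probability $1-\e{f_{\t'}\tau}\approx f_{\t'}\tau$ vanishes, the second-priority stream contributes essentially nothing, and your per-bin ratio degenerates to $(1-\e{f_\t})/f_\t$, which at $f_\t=1$ is $1-\e{1}\approx 0.632<0.702$. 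So the minimization ``over all feasible configurations'' in your last step yields only $1-1/e$, not $0.702$. The paper gets around this precisely by \emph{not} taking a per-bin minimum of the raw bound: it first linearizes the second-priority probability in the distance-two degrees (Lemma \ref{lem:priority_two_approx}, using concavity of $C(\cdot,q_\t)$ so that $C(u,q_\t)\geq C(1,q_\t)u$), and then performs a global rearrangement over all bins,
\begin{equation*}
\sum_{\t\in\T}\sum_{y\sim\t}\int_{x\in J_{(y,z)}\setminus I_{(y,z)}}\bigl[1-f_{z_{1,y}(x)}\bigr]dx
=\sum_{\t\in\T}\sum_{y\sim\t}\int_{x\in I_{(y,z)}\setminus J_{(y,z)}}\bigl[1-f_{\t}\bigr]dx
\leq \sum_{\t\in\T} f_\t(1-f_\t),
\end{equation*}
which charges the loss caused by low-degree distance-two bins back to those bins themselves (whose own ratios $(1-\e{f_{\t'}})/f_{\t'}$ are close to $1$ and can absorb it). Only after this charging step does a node-based minimization of $\bigl[(1-\e{f_\t})+q_\t\e{2}-q_\t^2\e{1}(\tfrac12-\e{1})-\e{2}f_\t(1-f_\t)\bigr]/f_\t$ make sense. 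Your plan has no analogue of this global step, so the final optimization fails.

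A secondary inaccuracy: you assert that Observation \ref{obs:priority2_slots} makes $\sum_{\t'\neq\t}\beta(\t',\t)$ ``nearly equal to $f_\t$.'' That is not true in general; the correct statement is Lemma \ref{lem:q_z}, $q_\t\geq \ln 2+f_\t-1$, whose proof needs the super-node argument combined with $f_{(\s,\t)}\leq 1-\e{r_\s}$ from Proposition \ref{prop:OPT}, and whose value at $f_\t=1$ is only $\ln 2\approx 0.693$ --- this quantitative loss is exactly where the constant $0.702$ (rather than something larger) comes from. Your conditional-fill bound $\P{\t'\text{ full at }\tau\mid \t\text{ empty}}\geq 1-\e{f_{\t'}\tau}$ and the disjointness/monotonicity justification are in the same spirit as Lemma \ref{lem:piz} and are fine, as is the continuous-time/ODE passage; the missing idea is the linearization-plus-global-charging that decouples each bin from its distance-two neighborhood.
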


Unlike Algorithm 1, the analysis of Algorithm 2 is fairly intricate, mainly because the adaptivity of the algorithm introduces new dependencies. We will present the proof in a few steps to build an intuition before getting to the actual calculations.

\begin{proof}
Consider a non-stochastic node $\t \in \T$. Bin $\t$ can be matched
as a first priority bin or as a second priority bin. Note that
a bin will be matched once it is tried as a first or second priority.
We define the event $\mathcal{A}_{\t}(t)$ to be the event that bin $\t$ was tried as a first
priority bin by time $t$, i.e., at any time $1, 2, \ldots, t$. Also, define $\mathcal{B}_{\t}(t)$
to be the event that bin $\t$ was tried as a second priority bin at time $t$.
Using the notation defined in the previous section:

\begin{align} \label{eq:Node_Exp}
\E{X_{\t}} =  \P{\mathcal{A}_{\t}(b) \vee \cup_{t=1}^{b} \mathcal{B}_{\t}(t)}
& =  \P{\mathcal{A}_{\t}(b)} + \P{ \cup_{t=1}^{b} \mathcal{B}_{\t}(t)\wedge \overline{\mathcal{A}_{\t}}(b)} \nonumber \\
& =  \P{\mathcal{A}_{\t}(b)} + \P{ \cup_{t=1}^{b} \mathcal{B}_{\t}(t) \Big| ~\overline{\mathcal{A}_{\t}}(b)}\P{\overline{\mathcal{A}_{\t}}(b)}
\end{align}

We need to compute $\P{\mathcal{A}_{\t}(b)}$. Instead we compute $\P{\mathcal{A}_{\t}(t)}$ for $1 \leq t \leq b$;
at each time step, the probability that a ball tries  $\t$ as a first priority bin is equal to the probability that
a ball of type $\s$ arrives, where $\s$ is connected to $\t$ through edge $(y,z)$, and
we choose a point in the interval $I_{(y,z)}$.
This probability is $\frac{\sum_{y \sim \t} f_{(y,z)}}{\sum_{\s \in \S} r_\s} = \frac{f_{\t}}{b}$, and we have:

\begin{eqnarray} \label{eq:priority_one}
\P{\mathcal{A}_{\t}(t)} = 1 - (1-\frac{f_{\t}}{b})^t = 1 - \e{\frac{tf_{\t}}{b}} + o(1/b).
\end{eqnarray}

Thus $\P{\mathcal{A}_{\t}(b)} = 1 - \e{f_{\t}}$. The more difficult part of the analysis is to lower-bound $\P{\cup_{t=1}^{b} \mathcal{B}_{\t}(t)| ~\overline{\mathcal{A}_{\t}}(b)}$. To analyze this probability, we define the parameter $q_{\t}:= \sum_{y \sim \t} \int_{x \in J_{(y,z)} \setminus I_{(y,z)}} 1 dx$. Roughly speaking, we can interpret $q_z$ as the fractional degree of $z$ in the second priority. Note that $q_z \leq f_z$ and the equality holds iff for all $y \sim z$, $I_{(y,z)} \cap J_{(y,z) } = \emptyset$. In Lemma \ref{lem:q_z} we lower-bound $q_z$ in terms of $f_z$. The following lemma lower-bounds $\P{\cup_{t=1}^{b} \mathcal{B}_{\t}(t)| ~\overline{\mathcal{A}_{\t}}(b)}$ in terms of $f_z$, $q_z$, and the fractional degree of nodes at distance $2$ form $z$.


\begin{lemma}
For any non-stochastic node $\t$ we have:
\begin{eqnarray}
\label{eq:secondprio_lowerbound}
 \P{\cup_{t=1}^{b} \mathcal{B}_{\t}(t)\Big| \overline{\mathcal{A}_{\t}}(b)} \geq  \frac{1}{b} \sum_{t=1}^{b} \sum_{y \sim \t} \int_{x \in J_{(y,z)} \setminus I_{(y,z)}} \left(1 - \e{\frac{t f_{\fp{\s}{x}}}{b}} \right) dx \left[1 - \frac{q_{\t}}{b}(b-t)\right],
\end{eqnarray}
\end{lemma}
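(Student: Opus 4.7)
The plan is to decompose $\bigcup_{t=1}^{b}\mathcal{B}_{\t}(t)$ by the \emph{last} time step at which $\t$ is tried as a second-priority bin, giving
\[
\P{\bigcup_{t=1}^{b}\mathcal{B}_{\t}(t)\Big|\overline{\mathcal{A}_{\t}}(b)} \;=\; \sum_{t=1}^{b}\P{\mathcal{B}_{\t}(t)\text{ and }\overline{\mathcal{B}_{\t}(s)}\ \forall s>t\Big|\overline{\mathcal{A}_{\t}}(b)}.
\]
Because arrivals at distinct times are mutually independent, and because $\overline{\mathcal{A}_{\t}}(b)=\bigcap_{s=1}^{b}\overline{\mathcal{A}_{\t}(s)}$ decomposes into an independent per-step restriction ``$x_{s}\notin I_{(y_{s},\t)}$'', each summand factors cleanly into a ``time-$t$'' probability and a ``no second-priority try after time $t$'' probability.

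For the tail factor I would enlarge $\mathcal{B}_{\t}(s)$ to $\widetilde{\mathcal{B}}_{\t}(s):=\{x_{s}\in J_{(y_{s},\t)}\setminus I_{(y_{s},\t)}\}$, which depends only on the time-$s$ arrival and contains $\mathcal{B}_{\t}(s)$. By the very definition of $q_{\t}$, its probability conditional on $\overline{\mathcal{A}_{\t}(s)}$ equals $(q_{\t}/b)/(1-f_{\t}/b) = q_{\t}/b + o(1/b)$. Independence across $s>t$ together with the Bernoulli inequality then delivers the bracket $1 - q_{\t}(b-t)/b$ appearing in \eqref{eq:secondprio_lowerbound}.

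For the time-$t$ factor I would expand by the type $y$ of the arriving ball and the uniform coordinate $x$ to obtain
\[
\P{\mathcal{B}_{\t}(t)\Big|\overline{\mathcal{A}_{\t}}(b)} \;=\; \frac{1+o(1)}{b}\sum_{y\sim\t}\int_{x\in J_{(y,\t)}\setminus I_{(y,\t)}}\P{\fp{y}{x}\text{ is full at time }t-1\Big|\overline{\mathcal{A}_{\t}}(b)}\,dx,
\]
and lower-bound the integrand by $\P{\mathcal{A}_{\fp{y}{x}}(t-1)\,|\,\overline{\mathcal{A}_{\t}}(b)}$. The key technical input is that conditioning on $\overline{\mathcal{A}_{\t}}(b)$ can only \emph{raise} this probability: at any single step the events $\{x_{s}\in I_{(y_{s},z')}\}$ and $\{x_{s}\in I_{(y_{s},\t)}\}$ are disjoint cells of the partition $\mathcal{I}_{y_{s}}$, so $\P{\overline{\mathcal{A}_{z'}(s)}\,|\,\overline{\mathcal{A}_{\t}(s)}} = (1-(f_{z'}+f_{\t})/b)/(1-f_{\t}/b) \leq 1-f_{z'}/b$. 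Iterating across $s=1,\ldots,t-1$ yields $\P{\mathcal{A}_{z'}(t-1)\,|\,\overline{\mathcal{A}_{\t}}(b)} \geq 1 - \e{(t-1)f_{z'}/b} + o(1/b)$, and the further shift from exponent $t-1$ to $t$ in the statement of the lemma is one more $o(1/b)$ term absorbed into the approximation we have been carrying throughout.

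Combining these three pieces gives the inequality. The main obstacle — and the reason the lemma is not a straight union-bound exercise — is the conditioning on $\overline{\mathcal{A}_{\t}}(b)$: one must verify both that it factors across time steps in a way compatible with the product structure above, and that it is \emph{favourable} for the auxiliary events $\mathcal{A}_{z'}$ on which the ``full at time $t-1$'' estimate rests. This second point is where the disjointness of the cells of $\mathcal{I}_{y}$ does the essential work, converting what would otherwise be a nuisance into a useful monotonicity.
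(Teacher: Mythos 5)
Your proposal is correct, and it reaches the paper's bound by a somewhat different route on the union step while coinciding with the paper on the per-time-step estimate. The paper handles $\cup_{t}\mathcal{B}_{\t}(t)$ with a truncated inclusion--exclusion (Bonferroni) bound, factoring the pairwise terms as $\P{\mathcal{B}_{\t}(t)|\overline{\mathcal{A}_{\t}}(b)}\,\P{\mathcal{B}_{\t}(u)|\overline{\mathcal{A}_{\t}}(b)\cap\mathcal{B}_{\t}(t)}$ and upper-bounding the second factor by $q_{\t}/(b-f_{\t})\approx q_{\t}/b$; you instead use an exact last-occurrence decomposition, enlarge the later events to $\widetilde{\mathcal{B}}_{\t}(s)=\{x_s\in J_{(y_s,\t)}\setminus I_{(y_s,\t)}\}$, and use conditional independence across time steps plus Bernoulli's inequality. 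Both devices produce the identical factor $1-\frac{q_{\t}}{b}(b-t)$, with the same negligible $O(1/b)$ slack from $q_{\t}/(b-f_{\t})$ versus $q_{\t}/b$. One caveat: the factorization does \emph{not} hold for the original events $\mathcal{B}_{\t}(s)$, $s>t$, since those depend on whether first-priority bins are full and hence on earlier arrivals; your argument is sound only because you apply it after passing to $\widetilde{\mathcal{B}}_{\t}(s)$, which is measurable with respect to the time-$s$ arrival alone (and, under the conditioning $\overline{\mathcal{A}_{\t}}(b)$, contains $\mathcal{B}_{\t}(s)$), so the complements only shrink and the product bound is a valid lower bound. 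For the time-$t$ factor, your expansion over the type $y$ and the coordinate $x$, and the lower bound of ``$\fp{\s}{x}$ full'' by ``$\fp{\s}{x}$ tried as first priority,'' is exactly the paper's Lemma \ref{lem:piz} with $\pi_{\t}(t)$ written out as an integral of indicator probabilities. What your write-up adds is the explicit justification that conditioning on $\overline{\mathcal{A}_{\t}}(b)$ can only increase $\P{\mathcal{A}_{\fp{\s}{x}}(\cdot)}$, via disjointness of the cells $I_{(y,z')}$ and $I_{(y,\t)}$ in the partition $\mathcal{I}_{y}$; the paper invokes \eqref{eq:priority_one} without spelling this monotonicity out, so your argument makes that implicit step rigorous. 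The $t-1$ versus $t$ shift in the exponent is, as you say, absorbed in the same $o(\cdot)$ bookkeeping the paper already performs.
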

\begin{proof}
Using inclusion-exclusion principle, we have:
\begin{eqnarray}
\label{eq:inclusion_exclusion}
\P{\cup_{t=1}^{b}\mathcal{B}_{\t}(t)| ~\overline{\mathcal{A}_{\t}}(b)} &\geq & \sum_{t = 1}^{b} \P{\mathcal{B}_{\t}(t)| ~\overline{\mathcal{A}_{\t}}(b)} - \sum_{1 \leq t < u \leq b} \P{\mathcal{B}_{\t}(t) \cap \mathcal{B}_{\t}(u)| ~\overline{\mathcal{A}_{\t}}(b)}\nonumber\\
&=& \sum_{t = 1}^{b} \P{\mathcal{B}_{\t}(t)| ~\overline{\mathcal{A}_{\t}}(b)} \left[1- \sum_{t < u \leq b}  \P{\mathcal{B}_{\t}(u)| ~\overline{\mathcal{A}_{\t}}(b) \cap \mathcal{B}_{\t}(t)}\right]\nonumber
\end{eqnarray}
It is sufficient to upper-bound $\P{\mathcal{B}_{\t}(u)| ~\overline{\mathcal{A}_{\t}}(b) \cap \mathcal{B}_{\t}(t)}$, and to lower-bound $\P{\mathcal{B}_{\t}(t)| ~\overline{\mathcal{A}_{\t}}(b)}$. We start by showing the former, the latter is proved in Lemma \ref{lem:piz}.

The probability that $\t$ is tried at time $u$ conditioned on the event $\overline{\mathcal{A}_{\t}}(b)$ is at most the probability that a ball of type $\s$ arrives, where $\s\sim \t$, and a number $x \in J_{(y,z)} \setminus I_{(y,z)}$  is chosen.  Note that since we are conditioning on the event that
$\t$ is not tried as a first priority, the sampled point cannot belong to $J_{(y,z)} \cap I_{(y,z)}$.
By the definition of $q_z$, the total length of the intervals $J_{(y,z)} \setminus I_{(y,z)}$ for all $y~\sim z$ is $q_z$.

Conditioning on the event $\overline{\mathcal{A}_{\t}}(b)$ implies that
during the run of the algorithm, no ball arrives for the subintervals $I_{(y,z)}$ for $y\sim z$. This condition is equivalent to reducing the rate of any such nodes $\s$  by $f_{(y,z)}$.
In other words, we choose a point in subintervals with total length of $b - f_{\t}$.
Hence,
 the probability that $\t$ is tried at time $u$ conditioned on event $\overline{\mathcal{A}_{\t}}(b)$ is at most $\frac{q_{\t}}{b-f_z}$.
Since $t < u$, regardless of whether the event ${\cal B}_z(t)$ happens or not, the probability of ${\cal B}_z(u)$ cannot exceed $\frac{q_z}{b-f_z}$ (i.e. $\P{\mathcal{B}_{\t}(u)| ~\overline{\mathcal{A}_{\t}}(b) \cap \mathcal{B}_{\t}(t)}\leq \frac{q_z}{b-f_z}$). Since $f_z\leq 1$ we can approximate this by $\frac{q_z}{b}$ with an error term of $o(1/b)$ which we ignore for simplicity.

In lemma \ref{lem:piz} we lower-bound $\P{\mathcal{B}_{\t}(t)| ~\overline{\mathcal{A}_{\t}}(b)}$ (see inequality \ref{eq:Exp_pi_lower_bound}. Putting these together proves the lemma.


\end{proof}

\begin{lemma}
\label{lem:piz}
For any non-stochastic node $\t$, and any time $1\leq t\leq b$ we have:
\begin{eqnarray}
\P{\mathcal{B}_{\t}(t)| ~\overline{\mathcal{A}_{\t}}(b)}
\geq \frac{1}{b} \sum_{\s \sim \t} \int_{x \in J_{(y,z)} \setminus I_{(y,z)}} \left(1 - \e{\frac{t f_{\fp{\s}{x}}}{b}} \right) dx  \label{eq:Exp_pi_lower_bound}
\end{eqnarray}
\end{lemma}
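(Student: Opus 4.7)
The plan is to decompose $\mathcal{B}_\t(t)$ into three conditions that must hold at step $t$: (a) a ball of some type $\s \sim \t$ arrives; (b) the uniformly chosen position $x \in [0, r_{\s}]$ lands in $J_{(\s,\t)} \setminus I_{(\s,\t)}$, so that the second-priority bin is $\t$ while the first-priority bin is some $\t' := \fp{\s}{x} \neq \t$; and (c) $\t'$ is already full at the start of step $t$. Conditions (a)--(b) depend only on the slot chosen at time $t$, while (c) depends only on the slots at times $1, \ldots, t-1$. After handling the conditioning, I will treat (a)--(b) and (c) independently, lower-bound (c) by a monotone sub-event, and then integrate over the favorable step-$t$ slots.

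The key structural observation is that $\overline{\mathcal{A}_\t}(b)$ is the intersection of $b$ per-step independent events of the form ``the slot at step $s$ lies outside $\bigcup_{\s \sim \t} I_{(\s,\t)}$.'' Therefore, under $\overline{\mathcal{A}_\t}(b)$, the slots remain mutually independent across the $b$ steps, each uniform on the allowed region of total Lebesgue measure $b - f_\t$. In particular, the step-$t$ slot is independent of the history, and falls in any subset $S$ of the allowed region with conditional probability $|S|/(b-f_\t) \geq |S|/b$.

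For (c), note that since $\t' \neq \t$, the first-priority region $\bigcup_{\s' \sim \t'} I_{(\s',\t')}$ is disjoint from the forbidden region of $\overline{\mathcal{A}_\t}(b)$, so conditioning does not remove any slot that would try $\t'$ as first priority. Since any bin ever tried as first priority is full thereafter, the probability that $\t'$ is full at time $t$ is at least the probability that $\t'$ was tried as first priority in some earlier step $s \leq t-1$. By step-wise independence under the conditioning, this equals $1 - (1 - f_{\t'}/(b-f_\t))^{t-1} \geq 1 - (1 - f_{\t'}/b)^{t-1}$, which matches $1 - \e{t f_{\t'}/b}$ up to terms of order $1/b$ that are absorbed in the paper's standard $o(\cdot)$ conventions. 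Multiplying this lower bound on (c) by the step-$t$ probability of (a)--(b) (namely $|J_{(\s,\t)} \setminus I_{(\s,\t)}|/(b-f_\t) \geq |J_{(\s,\t)} \setminus I_{(\s,\t)}|/b$), summing over $\s \sim \t$, and integrating over $x \in J_{(\s,\t)} \setminus I_{(\s,\t)}$ yields the stated inequality. The main subtlety is the conditioning bookkeeping: one must verify that $\overline{\mathcal{A}_\t}(b)$ preserves across-step independence and leaves the first-priority region of $\t'$ untouched, and that the $O(1/b)$ mismatch between $t-1$ and $t$ in the exponent (and between $1/(b-f_\t)$ and $1/b$) is negligible in the paper's accounting.
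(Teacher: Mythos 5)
Your proposal is correct and follows essentially the same route as the paper's proof: the paper packages your events (a)--(c) into the random measure $\pi_{\t}(t)$ of slots in $J_{(\s,\t)}\setminus I_{(\s,\t)}$ whose first-priority bin is already full, applies iterated expectations, and lower-bounds the probability that $\fp{\s}{x}$ is full by the probability it was tried as first priority, exactly as you do. Your write-up is in fact slightly more explicit than the paper's on one point it glosses over, namely why conditioning on $\overline{\mathcal{A}_{\t}}(b)$ preserves step-wise independence and only increases the per-step first-priority probability of bins $\t'\neq\t$ (from $f_{\t'}/b$ to $f_{\t'}/(b-f_{\t})$), with the $O(1/b)$ discrepancies absorbed as in the paper.
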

\begin{proof}
The event $\mathcal{B}_{\t}(t)$ depends on whether the bins at distance 2 from $\t$ are full or not.
In order to incorporate the effect of the allocation of these bins on the matching of $\t$ at time $t$, we  study the evolution of the density of full bins at distance two from $\t$ as follows.
For any edge $e=(y,z)$ incident to $z$, define $F_{(y,z)}(t)$ to be those areas from $J_{(y,z)} \setminus  I_{(y,z)}$ whose corresponding first priority bin is full at time $t$. In other words, $x \in F_{(y,z)}(t)$ if $\fp{\s}{x}$ is full before time $t$. Also define $\pi_{\t}(t)$ to be the sum of the length of those intervals (i.e. $\pi_{\t}(t) = \sum_{y \sim \t} \int_{x \in F_{(y,z)}(t)} 1 dx$). First we show that $\P{\mathcal{B}_{\t}(t)| ~\overline{\mathcal{A}_{\t}}(b)}  = \frac{\E{\pi_{\t}(t)}}{b-f_z}$, then we lower-bound $\E{\pi_{\t}(t)}$.

First observe that  the bin $z$ will be tried at time $t$ as a second priority iff  a ball of type $\s\sim z$ arrives, and we choose $x \in F_{(y,z)}(t)$.
Thus the conditional probability that bin $\t$ is tried at time $t$ as the second priority is
$\P{\mathcal{B}_{\t}(t)| ~\overline{\mathcal{A}_{\t}}(b) \wedge \pi_{\t}(t)}  = \frac{\pi_{\t}(t)}{b-f_z}$.
We illustrate this through an example. In  the graph of Figure \ref{fig:shift_exmaple} let $e_1$ be the only edge adjacent to $z_1$. Suppose at time $t$, $z_2$ is full and $z_3$ is empty; we want to compute $\P{\mathcal{B}_{\t_1}(t)| ~\overline{\mathcal{A}_{\t_1}}(b) \wedge \pi_{\t_1}(t)}$. We have $F_{(y,z_1)}(t) = [0.5, 0.7]$, and $\pi_{z_1}(t) = 0.2$. Since $z_1$ will be tried as a second priority only if the arriving ball is of type $y$ and $x\in F_{(y,z_1)}$, we get $\P{\mathcal{B}_{\t_1}(t)| ~\overline{\mathcal{A}_{\t_1}}(b) \wedge \pi_{\t_1}(t)} = \frac{0.2}{b-0.5}$.
By the law of iterative expectations we obtain:
\begin{eqnarray}
\label{eq:lem:secpri}
\P{\mathcal{B}_{\t}(t)| ~\overline{\mathcal{A}_{\t}}(b)} \geq \frac{\E{\pi_{\t}(t)| \overline{\mathcal{A}_{\t}}(b)}}{b}
\end{eqnarray}

It remains to lower-bound $\E{\pi_{\t}(t)| \overline{\mathcal{A}_{\t}}(b)}$.
Using definition of $\pi_{\t}(t)$, we write $\E{\pi_{\t}(t)| \overline{\mathcal{A}_{\t}}(b)}$ as:

\begin{eqnarray}
\label{eq:pi_z}
\E{\pi_{\t}(t)| \overline{\mathcal{A}_{\t}}(b)}
 &=&   \sum_{y \sim \t} \int_{x \in J_{(y,z)} \setminus I_{(y,z)}} \E{\I{x\in F_{(y,z)}(t)}| \overline{\mathcal{A}_{\t}}(b)} dx \nonumber\\
 &=&   \sum_{y \sim \t} \int_{x \in J_{(y,z)} \setminus I_{(y,z)}} \P{x\in F_{(y,z)}(t)| \overline{\mathcal{A}_{\t}}(b)} dx
\end{eqnarray}

It suffices to lower-bound $\P{x\in F_{(y,z)}(t)| \overline{\mathcal{A}_{\t}}(b)}$.
 As explained above, $F_{(y,z)}(t)$ is a non-decreasing random process that depends on the allocation of the bins at distance 2 from $z$ at time $t$.
For $x \in J_{(y,z)} \setminus I_{(y,z)}$, let $z' = \fp{\s}{x}$.
Note that $x\in F_{(y,z)}(t)$ iff $z'$ is full at time $t$. Thus it suffices to compute the probability that $z'$ is full at time $t$.
Observe that if $z'$ is full at time $t$, it has been tried at least once as a first or second priority bin. Therefore, the probability of $z'$ being full at time $t$ is at least the probability of event ${\cal A}_{z'}(t)$. For simplicity, we ignore the possibility of the trial of $z'$ as a second priority and obtain the following lower bound:

\begin{align*}
\P{x\in F_{(y,z)}(t)| ~ \overline{\mathcal{A}_{\t}}(b)}
\geq \P{\mathcal{A}_{\fp{\s}{x}}(t)| ~ \overline{\mathcal{A}_{\t}}(b)}
\geq 1 - \e{\frac{tf_{\fp{\s}{x}}}{b}}.
\end{align*}
where the last inequality follows from \eqref{eq:priority_one}. Substituting the RHS into \eqref{eq:pi_z} and using \eqref{eq:lem:secpri} imply the Lemma.
\end{proof}

Putting equations \eqref{eq:Node_Exp}, \eqref{eq:priority_one}, \eqref{eq:secondprio_lowerbound} together and using $e^{-f_\t}\geq \e{1}$, we can lower bound the competitive ratio of Algorithm \ref{alg:adapt}:

\begin{align}
\label{eq:goalcoupled}
\frac{\E{\alg}}{\E{\opt}} \geq \frac{\sum_{\t \in \T} \left\{\left(1-\e{f_{\t}} \right) + \e{1} \left[\frac{1}{b} \sum_{t=1}^{b} \sum_{y \sim \t} \int_{x \in J_{(y,z)} \setminus I_{(y,z)}} \left(1 - \e{\frac{t f_{\fp{\s}{x}}}{b}} \right) dx \left[1 - \frac{q_{\t}}{b}(b-t)\right] \right] \right\}}{\sum_{\t \in \T} f_{\t}}
\end{align}

In the rest of the proof we show that the ratio attains its minimum when the fractional degree of all
non-stochastic nodes are exactly one, i.e., $f_{\t} = 1$, $\forall \t \in \T$. As a warm up, we first analyze this extreme case.
We have:
\begin{eqnarray}
\frac{\E{\alg}}{\E{\opt}} &\geq  & (1-e^{-1}) + e^{-1}\left[\frac{1}{b} \sum_{t=1}^b \sum_{y\sim \t} \int_{x\in J_{(y,z)}\setminus I_{(y,z)}} (1-e^{-\frac{t}{b}}) dx [1-\frac{q_\t}{b}(b-t)] \right]\nonumber\\
&=& (1-e^{-1}) + e^{-1}\left[\frac{q_z}{b} \sum_{t=1}^b (1-e^{-\frac{t}{b}})[1-\frac{q_\t}{b}(b-t)] \right]\nonumber\\
&\geq& 1-e^{-1}+q_z e^{-2} - e^{-1}q_z^2(\frac12-e^{-1}) \geq 0.702,
\label{eq:degree1-1case}
\end{eqnarray}
where the last inequality follows from the observation that for bins with $f_\t = 1$ we have $q_\t \geq \ln{2}$ (see Lemma \ref{lem:q_z} for a proof).


In the remaining parts of the proof we need to show if the fractional degree of some bins are much smaller than 1, still the competitive ratio of the algorithm remains larger than $0.702$.
Unfortunately, the dependencies between the fractional degree of $\t$
and  bins at distance 2 from $\t$ result in a significant change in the probability of $\t$ being matched as a second priority. In particular, if all of
the bins at distance 2 from $z$ have a very small rate (i.e. if $f_{\t_1}\simeq \frac{1}{n}$), then $\P{\cup_{t=1}^b {\cal B}_\t(t)| \overline{{\cal A}_\t(b)}} = O(\frac{1}{n})$. This implies that
we can not lower bound the RHS of \eqref{eq:goalcoupled} by lower bounding the worst matching probability of a bin.
Instead, in the following lemma we write the probability of $z$ being tried as a second priority bin in terms of a {\bf linear} function of $f_{\t},q_\t$ and the fractional degree of bins at distance 2 from $\t$. This will enable us to lower-bound the RHS of \eqref{eq:goalcoupled} by a node based ratio:


\begin{lemma} \label{lem:priority_two_approx}
For any non-stochastic node $\t$, 
we have:
\begin{align} \label{eq:priority_two_no_two}
\P{\cup_{t=1}^{b} \mathcal{B}_{\t}(t)| ~\overline{\mathcal{A}_{\t}}(b)}  \geq  q_{\t} \e{1} - q^2_{\t} \left(\frac{1}{2} - \e{1} \right)
- \e{1} \sum_{y \sim \t} \int_{x \in J_{(y,z)} \setminus I_{(y,z)}} \left[1 - f_{\fp{\s}{x}}\right] dx.
\end{align}
\end{lemma}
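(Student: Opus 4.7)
The plan is to start from inequality \eqref{eq:secondprio_lowerbound} and isolate the dependence on $f_{\fp{\s}{x}}$ through the elementary identity
\begin{equation*}
1 - \e{tf_{\fp{\s}{x}}/b} \;=\; \bigl(1 - \e{t/b}\bigr) \;-\; \bigl(\e{tf_{\fp{\s}{x}}/b} - \e{t/b}\bigr).
\end{equation*}
Since $f_{\fp{\s}{x}} \le 1$, both summands are nonnegative; the accompanying factor $[1 - q_\t(1-t/b)] = 1 - q_\t + q_\t t/b$ also lies in $[0,1]$ because $q_\t \le f_\t \le 1$. This makes the split legitimate and lets the two halves be handled separately.

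For the ``principal'' half, I use the fact that $\sum_{y \sim \t}\int_{x \in J_{(y,z)}\setminus I_{(y,z)}} dx = q_\t$ by the very definition of $q_\t$. Replacing the Riemann sum over $t$ by an integral (absorbing the $o(1/b)$ error as the paper does throughout),
\begin{equation*}
\frac{1}{b}\sum_{t=1}^{b}\bigl(1-\e{t/b}\bigr)\bigl[1 - q_\t(1-t/b)\bigr] \;\longrightarrow\; \int_0^1 (1-\e{s})(1 - q_\t + q_\t s)\, ds \;=\; \e{1} - q_\t\bigl(\tfrac12 - \e{1}\bigr),
\end{equation*}
so this half contributes exactly $q_\t \e{1} - q_\t^2(\tfrac12 - \e{1})$ to the lower bound, matching the first two terms of the claim.

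For the ``correction'' half, I bound the factor $[1-q_\t(1-t/b)]$ by $1$, set $a := f_{\fp{\s}{x}}$, and approximate the inner time average by
\begin{equation*}
\int_0^1 \bigl(\e{sa} - \e{s}\bigr)\,ds \;=\; \frac{1-\e{a}}{a} - (1-\e{1}).
\end{equation*}
The crux of the proof is the one-variable inequality
\begin{equation*}
\frac{1-\e{a}}{a} - (1 - \e{1}) \;\le\; \e{1}(1-a) \qquad \text{for all } a \in [0,1],
\end{equation*}
equivalently $\e{a} \ge 1 - a + a^2/e$. Setting $h(a) := \e{a} - 1 + a - a^2/e$, we have $h(0)=h(1)=0$ and $h'(0) = 0$; the second derivative $h''(a) = \e{a} - 2/e$ changes sign exactly once on $[0,1]$ (positive at $0$, negative at $1$), which forces $h'$ to be first nonnegative then nonpositive, and hence $h \ge 0$ throughout.

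Integrating this pointwise bound in $x$ and summing over $y \sim \t$ gives the correction $\e{1}\sum_{y \sim \t}\int_{x \in J_{(y,z)}\setminus I_{(y,z)}}(1 - f_{\fp{\s}{x)}})\,dx$; subtracting it from the principal half yields exactly \eqref{eq:priority_two_no_two}. The only genuinely nontrivial step is the calculus inequality above—everything else is algebraic rearrangement, the identity defining $q_\t$, and the Riemann-sum approximation already implicit in the paper's conventions.
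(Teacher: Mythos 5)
Your proof is correct, and it arrives at exactly the same pointwise bound as the paper, but by a somewhat different mechanism. Writing $a:=f_{\fp{\s}{x}}$ and $C(a,q_\t):=\frac1b\sum_{t=1}^b\bigl(1-e^{-ta/b}\bigr)\bigl(1-\frac{q_\t}{b}(b-t)\bigr)$, the paper notes that $C(\cdot,q_\t)$ is concave in $a$ (a nonnegatively weighted sum of concave exponentials), lower-bounds it by its chord through $a=0$ and $a=1$, computes $C(1,q_\t)=e^{-1}-q_\t(\tfrac12-e^{-1})$, and then relaxes $C(1,q_\t)\,a$ to $C(1,q_\t)-e^{-1}(1-a)$ using $C(1,q_\t)\le e^{-1}$. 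You instead split $1-e^{-ta/b}=(1-e^{-t/b})-(e^{-ta/b}-e^{-t/b})$, evaluate the principal part exactly (its total $x$-measure is $q_\t$ by definition, and the time average gives $e^{-1}-q_\t(\tfrac12-e^{-1})$), discard the weight factor $1-\frac{q_\t}{b}(b-t)\in[0,1]$ in the nonnegative correction, and control the remaining time average via the one-variable inequality $e^{-a}\ge 1-a+a^2/e$ on $[0,1]$, proved by the sign pattern of $h''$. That inequality is precisely the chord bound for the $q_\t$-free part of $C$, so your argument can be read as establishing the relevant consequence of concavity by hand: the paper's route avoids the explicit calculus, while yours makes transparent where the bound is tight (at $a\in\{0,1\}$, consistent with the remark that the worst case occurs at $f_\t=1$). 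The auxiliary facts you invoke---$q_\t\le f_\t\le 1$ so the weight lies in $[0,1]$, nonnegativity of the correction because $f_{\fp{\s}{x}}\le 1$, and the Riemann-sum approximations with $o(1/b)$ error---all match the paper's conventions, so there is no gap.
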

\begin{proof}
The proof of this lemma is mainly algebraic. 
First note that we can write equation \eqref{eq:secondprio_lowerbound} as
\begin{eqnarray}
\P{\cup_{t=1}^{b} \mathcal{B}_{\t}(t)| ~\overline{\mathcal{A}_{\t}}(b)}  \geq
 \sum_{y \sim \t} \int_{x \in J_{(y,z)} \setminus I_{(y,z)}} C(f_{\fp{\s}{x}}, q_{\t})  dx
\label{eq:app:priority_two_no_two}
\end{eqnarray}
where $C(f_{\fp{\s}{x}}, q_{\t}) := \frac{1}{b}\sum_{t=1}^{b}  (1 - \e{\frac{t f_{\fp{\s}{x}}}{b}} ) (1 - \frac{q_{\t}}{b}(b-t))$,  is a concave function of $f_{\fp{\s}{x}}$; this follows
from the fact that $C(.,q_\t)$ is a weighted sum of exponential functions with negative weights.  Therefore,
we can lower-bound $C(\cdot,q_\t)$ by a linear function of $f_{\fp{\s}{x}}$. Since  $0 \leq f_{\fp{\s}{x}} \leq 1$ we have:
\[
C(f_{\fp{\s}{x}}, q_{\t}) \geq C(0, q_{\t}) + [C(1, q_{\t}) - C(0, q_{\t})] f_{\fp{\s}{x}} = C(1, q_{\t})  f_{\fp{\s}{x}},
\]
where the last equality follows by the observation that $C(0, q_{\t}) = 0$. On the other hand, we have $C(1, q_{\t}) = \e{1} - q_{\t}(1/2 - \e{1})$. Therefore:
\begin{eqnarray}
C(f_{\fp{\s}{x}}, q_{\t}) \geq  (e^{-1} - q_\t(\frac12-e^{-1})) f_{\fp{\s}{x}} \geq \e{1} - q_{\t}(1/2 - \e{1}) - \e{1}[1- f_{\fp{\s}{x}}]\nonumber
\end{eqnarray}
The lemma simply follows from substituting the above equation in \eqref{eq:app:priority_two_no_two},  and using the definition of $q_\t$.
\end{proof}
Substituting \eqref{eq:priority_two_no_two} in \eqref{eq:goalcoupled}, we get:
\begin{eqnarray*}
\frac{\E{\alg}}{\E{\opt}} &\geq& \frac{\sum_{\t \in \T} \left\{(1-\e{f_{\t}}) + q_{\t} \e{2} - q_{\t}^2 \e{1} (\frac{1}{2} - \e{1}) - \e{2} \sum_{e \sim \t} \int_{x \in J_e \setminus I_e} \left[1 - f_{\fp{\s}{x}}\right] dx \right\}}{\sum_{\t \in \T} f_{\t}}
\end{eqnarray*}
Next we rearrange the last term of the numerator to eliminate all dependencies between the fractional degree of $\t$ and the bins at distance 2 from $\t$. This enables us to analyze the competitive ratio of the algorithm by the worst case ratio among all bins.
We can write:
\begin{eqnarray*}
\sum_{\t\in \T} \sum_{y\sim \t}\int_{x\in J_{(y,z)}\setminus I_{(y,z)}} [1-f_{\fp{\s}{x}}]dx =
\sum_{\t\in\T} \sum_{y\sim\t}\int_{x\in I_{(y,z)}\setminus J_{(y,z)}} [1-f_{\fp{\s}{x}}]dx,
\end{eqnarray*}
Here the equality follows from the observation that for all $y\in Y$, both sides are integrating
over all $x\in [0,r_y]$ where $\fp{y}{x}\neq\sp{y}{x}$.
Since for any $x\in I_{(y,z)}\setminus J_{(y,z)}$, we have $\fp{\s}{x} = z$,  and
\begin{eqnarray*}
\sum_{\t\in\T} \sum_{y\sim\t}\int_{x\in I_{(y,z)}\setminus J_{(y,z)}} [1-f_{\fp{\s}{x}}]dx
= \sum_{\t\in\T} \sum_{y\sim\t}\int_{x\in I_{(y,z)}\setminus J_{(y,z)}} [1-f_\t]dx\leq \sum_{\t\in\T} f_{\t}[1-f_\t].
\end{eqnarray*}

Therefore,  the competitive ratio of the algorithm is at least:
\begin{eqnarray}
\frac{\E{\alg}}{\E{\opt}}
&\geq & \min_{\t \in \T} \frac{(1-\e{f_{\t}}) + q_{\t} \e{2} - q_{\t}^2 \e{1} (\frac{1}{2} - \e{1})-\e{2} f_{\t}\left[1 - f_{\t}\right] }{f_{\t}}.
 \label{eq:adpt_ratio_no_two}
\end{eqnarray}

Since for $0\leq q_\t\leq 1$, the RHS is an increasing function of $q_\t$, any lower-bound on $q_z$ also gives a lower-bound on the competitive ratio of the algorithm. In particular, if $f_\t\leq \frac12$, we can lower-bound $q_\t$ by zero and we get $\frac{\E{\alg}}{\E{\opt}} \geq \frac{1-e^{-f_\t}-e^{-2}f_\t[1-f_\t]}{f_\t}\geq 0.719$.
On the other hand, if $f_\t\geq \frac12$ we use the lower-bound $q_\t\geq \ln{2}+f_\t-1$ (see Lemma \ref{lem:q_z} for the proof), and we obtain that the worst lower-bound is attained for bins with fractional degree 1:
$$\frac{\E{\alg}}{\E{\opt}}\geq 1-e^{-1} +e^{-2}\ln{2}-e^{-1}(\ln{2})^2(\frac12-e^{-1})  \geq 0.702.$$

This completes the proof of Theorem \ref{thm:adap}.
\end{proof}
\begin{remark}
As we discussed earlier (equation \eqref{eq:degree1-1case}) the worst  competitive ratio of the algorithm is attained for bins with fractional degree 1, thus the linear bounds used in the proof of Lemma \ref{lem:priority_two_approx} does not change worst case analysis of the algorithm.
\end{remark}

\begin{lemma} \label{lem:q_z}
For any non-stochastic node $\t$, we have
$q_{\t} \geq \ln 2 + f_{\t} -1$
\end{lemma}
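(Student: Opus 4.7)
The plan is to upper-bound $f_\t - q_\t$ by $1-\ln 2$; the lemma is equivalent to this. Since $|J_{(\s,\t)}| = f_{(\s,\t)}$, the definition of $q_\t$ gives $f_\t - q_\t = \sum_{\s \sim \t} |I_{(\s,\t)} \cap J_{(\s,\t)}|$. By Observation~\ref{obs:priority2_slots}, for each $\s \sim \t$ the intersection $I_{(\s,\t)} \cap J_{(\s,\t)}$ is empty unless $(\s,\t)$ is the maximum-probability real edge of $\s$ \emph{and} $f_{(\s,\t)} > r_\s/2$; in that case its length is exactly $2f_{(\s,\t)} - r_\s$. So if we set $S := \{\s \sim \t : (\s,\t) \text{ is the max real edge of } \s,\ f_{(\s,\t)} > r_\s/2\}$, $A := \sum_{\s \in S} f_{(\s,\t)}$, and $B := \sum_{\s \in S} r_\s$, we obtain
\begin{equation*}
f_\t - q_\t \;=\; \sum_{\s \in S} \bigl(2 f_{(\s,\t)} - r_\s\bigr) \;=\; 2A - B.
\end{equation*}

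The key next step is to establish a \emph{subset} capacity constraint: $A \leq 1 - e^{-B}$. I would derive this directly from the definition of $f$ as the expected optimum matching. The events $E_\s := \{\opt \text{ matches } \s \text{ to } \t\}$ for $\s \in S$ are pairwise disjoint, because bin $\t$ is matched to at most one ball in any realization, and each $E_\s$ is contained in the event that a ball of type $\s$ arrives at least once. Hence
\begin{equation*}
A = \sum_{\s \in S} \P{E_\s} = \P{\bigcup_{\s \in S} E_\s} \;\leq\; \P{\text{some } \s \in S \text{ arrives}} = 1 - \prod_{\s \in S}\bigl(1 - r_\s/b\bigr)^b,
\end{equation*}
which tends to $1 - \e{B}$ as $b \to \infty$, with $o(1/b)$ error that the paper suppresses. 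Combining, $f_\t - q_\t \leq 2(1-\e{B}) - B =: g(B)$; since $g'(B) = 2\e{B} - 1$, $g$ attains its unique maximum at $B = \ln 2$ with $g(\ln 2) = 1 - \ln 2$. Therefore $f_\t - q_\t \leq 1 - \ln 2$, equivalently $q_\t \geq \ln 2 + f_\t - 1$.

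The main conceptual hurdle is realizing that the constraints explicitly stated in Proposition~\ref{prop:OPT} — the per-edge bound $f_{(\s,\t)} \leq 1 - \e{r_\s}$ and the node-capacity bound $f_\t \leq 1$ — are not enough. With just those one can bound each summand $2f_{(\s,\t)} - r_\s$ by $1 - \ln 2$ (maximum of $2(1-\e{r}) - r$ over $r$), but the sum over $|S|$ terms has no such uniform cap. What rescues the argument is the subset form of the node-capacity bound, $A \leq 1 - \e{B}$, which is not recorded in Proposition~\ref{prop:OPT} but follows immediately from the mutual-exclusivity observation above. Once this is in place, the entire proof reduces to the one-variable maximization of $g$.
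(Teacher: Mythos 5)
Your proof is correct and is essentially the paper's own argument: your set $S$ and quantities $A$, $B$ are the paper's $E^o_{\t}$, $f^o_{\t}$, $r^o_{\t}$, your identity $f_{\t}-q_{\t}=2A-B$ is the paper's $q_{\t}=f_{\t}+r^o_{\t}-2f^o_{\t}$, your subset bound $A\leq 1-\e{B}$ is exactly what the paper gets by merging the overlapping neighbors into a super node of rate $r^o_{\t}$ and invoking Proposition \ref{prop:OPT} (whose own proof is precisely your disjoint-matching-events-inside-arrival-events argument), and both proofs finish with the same one-variable optimization at $B=\ln 2$. One cosmetic remark: in this arrival model the exact probability that no type in $S$ arrives is $\left(1-B/b\right)^b$ rather than $\prod_{\s\in S}\left(1-r_{\s}/b\right)^b$, but both equal $\e{B}$ up to the $o(1/b)$ terms the paper ignores, so this does not affect the argument.
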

\begin{proof}
The proof  follows from Observation \ref{obs:priority2_slots} and an optimization over the
sampling rate of the neighboring stochastic nodes. 

\begin{figure}[htb]
\begin{centering}
%
%
%
%

\def\radius{2.6}
\def \Pointsize {1.4pt}
\begin{tikzpicture}[pre/.style={<-,shorten <=1.5pt,>=stealth,thick}, post/.style={->,shorten >=1pt,>=stealth,thick}]
\tikzstyle{every node}=[draw,shape=rectangle,minimum size=5mm, inner sep=0];
\path (0,0) node (z) {$z$};
\tikzstyle{every node}=[shape=circle,minimum size=6mm, inner sep=0];
\path (-\radius,+\radius) node [draw] (y1) {$y_1$} ++ (0,.5) node  {$r_{y_1}=.5$};
\path (0,+\radius) node [draw] (y2) {$y_2$} ++ (0,.5) node {$r_{y_2}=.5$};
\path (\radius,\radius) node [draw] (y3) {$y_3$} ++ (0,.5) node {$r_{y_3}=1$};

\tikzstyle{every node}=[];
\path[-] (z) edge node [left] {$f_{(y_1,z)}=.3$} (y1)
(z) edge node  {$f_{(y_2,z)}=.3$} (y2)
(z) edge node [right] {$f_{(y_3,z)}=.4$} (y3);
\path (6, 2.5) node {$E^o_z = \{ (y_1,z), (y_2,z)\}$};
\path (6, 1) node {$E^n_z = \{ (y_3,z)\}$};

\end{tikzpicture}
\end{centering}
\caption{An example of a non-stochastic node $\t$ with $|E^o_\t|>1$.}
\label{fig:rates}
\end{figure}
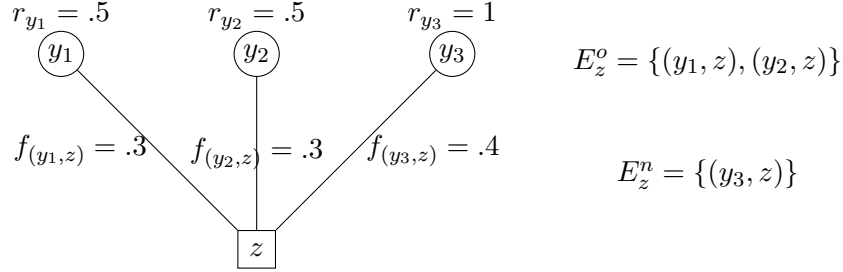

Let $E_\t$ be the set of edges incident to $\t$ in graph $G$.
We partition $E_{\t}$ into two subsets $E^{o}_\t$ and $E^{n}_{\t}$, such that $E^{o}_{\t}$ consists of edges $(\s,\t)$ where $f_{(\s,\t)}>\frac{1}{2} r_{\s}$, and $E^n_{\t}=E_\t\setminus E^o_\t$ are the rest of the edges. In words,  $E^{o}_{\t}$ is the set of edges $e$ for which $I_{(y,z)}$ and $J_{(y,z)}$ {\em overlap}.
For example, if the rates of all stochastic nodes are $1$, for any edge $(y,z)\in E^o_\t$ we must have $f_{(y,z)}>\frac{1}{2}$; but since $f_\t\leq 1$ we must have $|E^o_\t|\leq 1$. However, this is not necessarily true if we allow the stochastic
nodes to have arbitrary rates (see Figure \ref{fig:rates} for an example).
By Observation \ref{obs:priority2_slots}, we have:
\begin{equation}
\label{eq:qfnrofo}
q_{\t} = \sum_{y \sim \t} \int_{x \in J_{(y,z)} \setminus I_{(y,z)}} 1 dx = \sum_{\s:(\s,\t) \in E^{n}_{\t}}f_{(\s,\t)} + \sum_{\s:(\s,\t) \in E^{o}_{\t}}r_\s - \sum_{\s:(\s,\t) \in E^{o}_{\t}} f_{(\s,\t)}.
\end{equation}
Let $f^n_\t$, $r^o_\t$, and $f^o_\t$ be the first, second, and the third summations in the RHS, i.e.,
$q_{\t} = f^{n}_\t + r^o_\t - f^o_\t$. By Proposition \ref{prop:OPT} we can show $f^o_\t \leq (1- \e{r^o_\t})$;
it is sufficient to replace all stochastic neighbors of $\t$ with a super node $\s^*$ of rate $r^o_\t$,
and use Proposition \ref{prop:OPT} to conclude that
\begin{equation}
\label{eq:qsupernode}
f^o_\t = f_{(\s^*,\t)}\leq (1-e^{-r_{\s^*}})=(1-e^{-r^o_\t}).
\end{equation}
We can obtain a lower bound on $q_{\t}$ simply by using the above equations and noting that $f_\t=f^n_\t+f^o_\t$:
\begin{eqnarray*}
 q_z &=& f^n_\t+r^o_\t-f^o_\t = f_\t+r^o_\t-2f^o_\t\geq f_\t+r^o_\t-2(1-e^{-r^o_\t})\\
&\geq& f_\t+\ln{2}-2+2e^{-\ln{2}}=f_\t+\ln{2}-1,
\end{eqnarray*}
where the first equality follows from equation \eqref{eq:qfnrofo}, the first inequality follows from equation \eqref{eq:qsupernode}, and the second inequality follows from the fact that $r^o_\t=\ln{2}$ is the minimizer of
$r^o_\t+2e^{-r^o_\t}$.
\end{proof}

\begin{corollary}
If we restrict the sampling rates of all stochastic nodes to be integral (i.e. $r_\s$'s are integral),
then the competitive ratio of Algorithm \ref{alg:adapt} is at least $0.705$.
\end{corollary}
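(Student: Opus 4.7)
The plan is to reuse the entire framework culminating in \eqref{eq:adpt_ratio_no_two} and only sharpen the bound on $q_{\t}$ from Lemma~\ref{lem:q_z}, exploiting the structural consequence of integrality.

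First I would observe that because rates are integral and the normalization guarantees $r_{\s}\le 1$, every stochastic node has $r_{\s}=1$. In particular, an edge $(\s,\t)\in E^{o}_{\t}$ satisfies $f_{(\s,\t)}>\tfrac12 r_{\s}=\tfrac12$. Combined with $\sum_{\s}f_{(\s,\t)}=f_{\t}\le 1$, this forces $|E^{o}_{\t}|\le 1$, and hence $r^{o}_{\t}\in\{0,1\}$. This is the only place where integrality enters.

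Second, I would repeat the calculation in Lemma~\ref{lem:q_z} but restrict the minimization to the two feasible integer values. We still have $q_{\t}=f_{\t}+r^{o}_{\t}-2f^{o}_{\t}\ge f_{\t}+r^{o}_{\t}-2(1-\e{r^{o}_{\t}})$ by \eqref{eq:qsupernode}. The function $r\mapsto r+2\e{r}$ was previously minimized at the interior point $r=\ln 2$; over $\{0,1\}$ it attains its minimum at $r=1$, producing the strictly better bound
\begin{equation*}
q_{\t}\ \ge\ f_{\t}-1+2\e{1}.
\end{equation*}

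Third, I would substitute this bound into \eqref{eq:adpt_ratio_no_two}. For $f_{\t}\le\tfrac12$ the bound $q_{\t}\ge 0$ already yields ratio $\ge 0.719$, so nothing changes. For $f_{\t}\ge\tfrac12$, both Lemma~\ref{lem:q_z}'s expression and the new one are increasing in $q_{\t}$, and the same monotonicity/ concavity argument as in Theorem~\ref{thm:adap} shows the RHS is minimized at $f_{\t}=1$, where $q_{\t}\ge 2\e{1}$. Plugging in gives
\begin{equation*}
1-\e{1}+\tfrac{2}{e}\e{2}-\tfrac{4}{e^{2}}\e{1}\bigl(\tfrac12-\e{1}\bigr)\ =\ 1-\e{1}+4\e{4}\ \ge\ 0.705.
\end{equation*}

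The routine obstacle is confirming that $f_{\t}=1$ remains the binding value once $q_{\t}$ is replaced by $\max(0,f_{\t}-1+2/e)$; this is a single-variable numerical verification on $[1/2,1]$, analogous to the $f_{\t}=1$ extremality argument already carried out for $0.702$. No new conceptual difficulty arises; the gain comes entirely from tightening $q_{\t}$ using $r^{o}_{\t}=1$ instead of $r^{o}_{\t}=\ln 2$.
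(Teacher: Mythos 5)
Your proposal is correct and follows essentially the same route as the paper: integrality forces $r^{o}_{\t}\in\{0,1\}$, the worst case $r^{o}_{\t}=1$ sharpens Lemma~\ref{lem:q_z} to $q_{\t}\geq f_{\t}+2\e{1}-1$, and plugging this into \eqref{eq:adpt_ratio_no_two} with the minimum at $f_{\t}=1$ gives $1-\e{1}+4\e{4}\approx 0.7054$. The only difference is that you spell out the extremality check on $[1/2,1]$ and the algebra at $f_{\t}=1$, which the paper leaves implicit.
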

\begin{proof}
The corollary simply follows from a better lower-bound on $q_\t$ in terms of $f_\t$. Since the rates are integral we can show $q_\t\geq f_\t+2e^{-1}-1$; in particular, in the proof of Lemma \ref{lem:q_z} assuming integral rates, we get $r^o_\t\in \{0,1\}$ which  implies that $q_\t\geq f_\t+1-2(1-e^{-1})=f_\t+2e^{-1}-1$. Then the corollary follows from plugging this lower-bound into equation \eqref{eq:adpt_ratio_no_two}.
\end{proof}


\section{Upper Bounds for Online Algorithms}
\label{sec:hardness}

We will present three examples. The first example gives a straightforward $1-1/e$ upper bound for the performance of {\em non-adaptive randomized} algorithms. It shows that when the rates are arbitrarily small, no non-adaptive algorithm can achieve a competitive ratio better than $1-1/e$.
Note that a randomized non-adaptive algorithm predetermines {\em distribution} ${\cal{D}}_{\s,i}$ for the $i$-th arrival of type $\s$. In
other words, when the $i$-th ball of type $\s$ arrives it will be matched to the neighbor bin $\t$ with probability $\PP{{\s,i}}{\t}$.

\begin{proposition}
\label{prop:hardness_fractionalrate}
There is an instance of the online stochastic matching problem with small rates, $r_{\s} = o(1)$, for which no non-adaptive randomized algorithm can achieve a competitive ratio better than $1-\e{1}$.
\end{proposition}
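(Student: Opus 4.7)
The plan is to exhibit a family of instances (indexed by $n$) with rates tending to zero on which every non-adaptive randomized algorithm is forced into a balls-into-bins regime, and then use Jensen's inequality to push the competitive ratio down to $1-e^{-1}$ in the limit.

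\textbf{Instance.} Fix $n$ and take the complete bipartite graph with $|\T|=n$ bins and $N=n^{3}$ stochastic nodes, each of rate $r_\s = 1/n^{2}$. Normalization gives $b=\sum_\s r_\s = n$ and each rate is $o(1)$. Because the graph is complete bipartite and exactly $b=n$ balls arrive, every realization $\omega$ admits a perfect matching of all $n$ arrived balls to $n$ distinct bins. Hence $\opt(\omega)=n$ deterministically and $\eopt = n$.

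\textbf{Non-adaptive upper bound.} A non-adaptive randomized algorithm is specified by distributions $\{\mathcal{D}_{\s,i}\}$; write $p_{\s,i,\t}=\PP{\mathcal{D}_{\s,i}}{\t}$. For each bin $\t$ define the first-priority load
$$ L_\t := \sum_{\s} r_\s\, p_{\s,1,\t}, $$
so that $\sum_\t L_\t = \sum_\s r_\s \sum_\t p_{\s,1,\t} = n$. Bin $\t$ is matched only if at least one arriving ball attempts it, so $\P{\t \text{ matched}} \leq 1 - \P{\text{no arrival attempts } \t}$. Since the counts $N_\s$ are independent across $\s$ and, conditional on $N_\s$, the sampling events are independent across $i$,
$$ \P{\text{no arrival attempts } \t} = \prod_\s \E{\prod_{i=1}^{N_\s}(1 - p_{\s,i,\t})}. $$
Using $\P{N_\s \geq 2}=O(r_\s^{2})$, one checks that each inner expectation equals $1 - r_\s p_{\s,1,\t} + O(r_\s^{2})$. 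Taking the product and invoking $\sum_\s r_\s^{2} = N\cdot(1/n^{2})^{2} = 1/n = o(1)$ yields $\P{\text{no attempt on }\t} \geq e^{-L_\t}(1-o(1))$, hence $\P{\t\text{ matched}} \leq 1 - \e{L_\t} + O(1/n)$.

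\textbf{Finishing up.} Summing over bins and applying Jensen's inequality to the concave function $x\mapsto 1-e^{-x}$ (whose argument $L_\t$ averages to $1$) gives
$$ \E{\alg} \leq \sum_\t \bigl(1 - \e{L_\t}\bigr) + O(1) \leq n\bigl(1 - \e{1}\bigr) + O(1), $$
so $\E{\alg}/\eopt \leq (1-\e{1}) + O(1/n)$, which tends to $1-\e{1}$ as $n\to\infty$.

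\textbf{Main obstacle.} The only real technicality is the approximation $\P{\text{no attempt on }\t}\approx e^{-L_\t}$: the compound error from (i) arrivals with index $i\geq 2$ and (ii) the $\ln(1-x)=-x+O(x^{2})$ expansion must be collectively negligible against $\eopt=n$. The choice $N=n^{3}$, $r_\s=1/n^{2}$ is tuned precisely so that both sources contribute $O(1)$, which is harmless. A sanity check is that the algorithm cannot exploit its freedom in the offline phase: by Jensen, the maximum of $\sum_\t(1-e^{-L_\t})$ under $\sum L_\t=n$ is attained at $L_\t\equiv 1$, i.e.\ the symmetric strategy of attempting a uniformly random bin, which saturates the $1-e^{-1}$ bound.
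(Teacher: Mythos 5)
Your proof is correct and follows essentially the same route as the paper: a complete bipartite instance with many tiny-rate types on which $\opt$ is a perfect matching, a per-bin bound of the form $1-e^{-L_\t}$ on the match probability (the paper's $p_\t$ is your $L_\t$), and Jensen's inequality with $\sum_\t L_\t = b$ to conclude the $1-\e{1}$ bound; your only departures are the parameter choice ($n^3$ types of rate $1/n^2$ versus the paper's $n^2$ types of rate $1/n$) and a more explicit accounting of the error terms that the paper dismisses with ``with high probability at most one ball of each type arrives.''
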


\begin{proof}
Suppose $G(\S,\T,E)$ is a complete bipartite graph, where $|\S|=n^2$ and $|\T|=n = b$; also suppose the rate
of all types is $1/n$.
Since $G$ is a complete bipartite graph, $OPT$ can easily allocate all the arriving balls and $\eopt=n$.
On the other hand, since $r_{\s} = o(1)$, with high probability, there will be at most one ball of each type.
Therefore, any non-adaptive randomized algorithm only needs to predetermine one distribution ${\cal D}_{\s,1}$
for each type $\s$. For each bin $\t\in \T$, let $p_{\t}$ be the probability that an incoming ball is matched to $\t$.
In other words,
$$ p_{\t} = \sum_{\s\in\S} r(\s)\PP{{\s,1}}{\t} = \frac{1}{n}\sum_{\s\in\S}\PP{{\s,1}}{\t}$$
With probability
$\e{p_{\t}}$ no ball will be matched to the bin $\t$ in the run of the process.
Thus, $\ealg = \sum_{\t\in \T} (1-\e{p_{\t}})$.
Since function $1-\e{x}$ is concave we have:
\begin{align*}
\frac{\ealg}{\eopt} =  \frac{\sum_{\t\in \T} (1-\e{p_{\t}})}{n}
\leq   (1-\e{\frac{1}{n}\sum_{\t\in \T} p_{\t}}).
\end{align*}
On the other hand, we have:
\begin{align*}
\frac{1}{n}\sum_{\t\in \T} p_{\t} = &~\frac{1}{n^2}\sum_{\t\in\T} \sum_{\s\in \S} \PP{{\s,1}}{\t}
= \frac{1}{n^2} \sum_{\s\in \S} \sum_{\t\in\T} \PP{{\s,1}}{\t} =1.
\end{align*}
Therefore, $\frac{\ealg}{\eopt} \leq 1 - \e{1}$ which
completes the proof.
\end{proof}

Our next two examples give an upper bound on the performance of {\em any} deterministic or randomized online algorithm. In the first example, the rates are integral. Our upper bound of $1-e^{-2}$ is slightly better than the result of \cite{bahmani}.

\begin{proposition}
\label{prop:hardness_adapt}
There exists an instance of the online stochastic matching problem with integral rates for which no online algorithm can achieve an expected competitive ratio better than $1-e^{-2}\simeq 0.86$.
\end{proposition}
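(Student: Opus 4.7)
I would aim to exhibit an explicit instance with integral rates for which every (possibly adaptive randomised) online algorithm is suboptimal by a factor approaching $1 - e^{-2}$. Given the introduction's hint that the subsequent non-integral construction exploits the expected matching size of a \emph{random} bipartite graph, a natural first attempt for the integral case is to take $n$ bins and $n$ ball types each of rate $1$ (so $b = n$), with each ball type of degree exactly two in $G$. Concretely, let $\s_i$ be joined to $\t_i$ and $\t_{i+1 \bmod n}$, or more generally let the adjacencies be given by two fixed permutations chosen so that the resulting graph resembles a random $2$-regular bipartite graph. The intuition is that $\opt$, knowing the arrival sequence, can compute a near-perfect matching on the realised arrival multigraph, while any online algorithm must commit to each arrival's bin without knowledge of the future.

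\textbf{Analysis of $\opt$ and $\alg$.} Each rate-$1$ type appears approximately $\mathrm{Poisson}(1)$ times, so the realised arrival multigraph is a random augmentation of $G$. Using the matching-size estimates for random bipartite graphs in the cuckoo-hashing regime that the introduction cites (Dietzfelbinger et al., Frieze--Melsted, Fountoulakis--Khosla), I would argue $\E{\opt} = (1 - o(1))\, n$. On the $\alg$ side, let $p_\t$ be the probability that bin $\t$ is matched by the algorithm. By linearity, $\E{\alg} = \sum_\t p_\t$. The key calculation is to bound $p_\t$: since each of $\t$'s two incident ball types has rate $1$, the effective rate of arrivals adjacent to $\t$ is at most $2$, so a Poisson-hit-at-least-once calculation, together with a coupling/thinning argument that accounts for ball copies already committed elsewhere by the algorithm, should yield $p_\t \leq 1 - e^{-2}$ for each $\t$. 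Summing and dividing by $\E{\opt}$ then yields the claimed ratio.

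\textbf{Main obstacle.} The delicate step is showing the $1-e^{-2}$ cap on $p_\t$ \emph{uniformly} over every adaptive randomised strategy, not just greedy. One clean approach is to relax to a fractional online matching polytope, prove the gap there via LP duality, and lift back to integral algorithms by convexity; another is to exhibit an adversarial coupling in which each online decision that matches $\t$ must consume exactly one of the two Poisson$(1)$ arrival streams feeding $\t$, so that the total probability of matching $\t$ is bounded by the probability that such a stream ever fires, namely $1-e^{-2}$. Invoking the random-bipartite-matching estimates cited above will be needed to conclude that no algorithm --- even one that perfectly knows the input distribution --- can close this gap on the constructed instance.
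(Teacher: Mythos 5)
There is a genuine gap here, in fact two, and both are fatal to the proposed construction. First, the claim $\E{\opt} = (1-o(1))\,n$ is false for your instance. With $n$ ball types of rate $1$, $b=n$ arrivals, and every type of degree exactly $2$, the offline optimum is bounded by $\sum_{\s}\min(N_{\s},2)$, where $N_{\s}$ is the number of arrived balls of type $\s$ (a type with only two neighbors can never have three balls matched). Since $N_{\s}$ is approximately Poisson$(1)$, this gives $\E{\opt} \leq (2-3/e)n \approx 0.896\,n$, not $(1-o(1))n$. The cuckoo-hashing estimates you invoke do not help: they concern balls whose two choices are \emph{uniformly random} among all $\binom{n}{2}$ pairs (which in this model means $\binom{n}{2}$ types of tiny, non-integral rates, exactly what the integrality restriction forbids), and even then the relevant threshold is $c^*_2 = 1/2$, so at load $b/n = 1$ you are far above the regime where all balls can be matched. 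Second, and more fundamentally, your cap $p_{\t} \leq 1-e^{-2}$ is derived purely from arrival statistics (the total rate adjacent to $\t$ is $2$, so with probability $e^{-2}$ no neighboring ball ever arrives). That bound applies verbatim to the offline optimum as well, so it cannot separate $\E{\alg}$ from $\E{\opt}$; dividing a $(1-e^{-2})n$ bound on $\alg$ by an $\E{\opt}$ that is itself at most $(1-e^{-2})n$ (or $0.896n$) proves nothing about the competitive ratio. The hard part of the proposition is precisely to find a structure where \emph{offline} coordination recovers essentially everything while \emph{online} decisions provably waste arrivals, and your outline never supplies such a mechanism.

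The paper's construction is asymmetric in exactly the way yours is not: take $|\S_1|=|\T|=n$ rate-$1$ types joined to the bins by a perfect matching, plus $|\S_2|=n/e$ rate-$1$ types joined to \emph{all} bins, so $b=n(1+1/e)$. Offline, one matches each arrived $\S_1$-ball through its dedicated edge; in expectation $n/e$ bins see no dedicated ball, and the roughly $n/e$ fully flexible $\S_2$-balls fill them, so $\E{\opt}=n$. Online, whenever a ball of a type in $\S_1$ arrives whose dedicated bin is already full it is irrevocably lost, which yields the recursion $\E{|\Psi(t+1)|}\leq \E{|\Psi(t)|}\left(1-\tfrac1b\right)+1$ for the number of full bins $|\Psi(t)|$, hence $\E{\alg}\leq b\left(1-(1-\tfrac1b)^b\right)\approx n(1+1/e)(1-1/e)=(1-e^{-2})n$ for \emph{every} adaptive randomized algorithm. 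Note that this argument needs no LP duality or coupling over strategies: the loss is forced by degree-$1$ types colliding with already-full bins, a feature absent from any (near-)regular degree-$2$ instance. If you want to salvage your approach, you would have to introduce a similar mix of rigid (degree-$1$) and flexible (high-degree) types rather than a homogeneous $2$-regular graph.
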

\begin{proof}
Construct a bipartite graph $G(\S,\T,E)$, where $\S=\S_1 \cup \S_2$, $|\S_1| = |\T| = n$, and $|\S_2| = n/e$.
The set $E$ of edges consists of a perfect
matching between the vertices of  $\S_1$ and $\T$ denoted by $E_1$, and a complete bipartite
graph between $\S_2$ and $\T$, denoted by $E_2$. See Figure \ref{fig:counter_exmaple}.

First, we prove that $\E{\opt} = n$. Given the
sequence of arrivals, first we match through the perfect matching ($E_1$). In other
words, we match one ball of each type $\s_1 \in \S_1$. Note that with probability
$\e{1}$, there will be no ball of type $\s_1$, thus, in expectation, $(1-1/e)$ fraction
of the bins will remain empty after matching through $E_1$. On the other hand, the
expected number of balls of types $\S_2$ is $n/e$, which can be matched
with the $n/e$ empty bins through the edges of the complete bipartite graph, $E_2$.
Hence, this simple scheme finds the maximum matching and $\E{\opt} = n$.

\begin{figure*}
\centering
    \psfrag{ry1}{\textcolor{black}{\scriptsize{$r_{\s_1} = 1$}}}
    \psfrag{Y1}{\textcolor{black}{\scriptsize{$|\S_1| = n$, $f_{\s_1} = 1-1/e$, $f_{e_1} = 1-1/e$}}}
    \psfrag{ry2}{\textcolor{red}{\scriptsize{$r_{\s_2} = 1$}}}
    \psfrag{Y2}{\textcolor{red}{\scriptsize{$|\S_2| = n/e$, $f_{\s_2} = 1$, $f_{e_2} = 1/n$}}}
    \psfrag{Z}{\textcolor{black}{\scriptsize{$|\T| = n$, $f_{\t} = 1 $}}}
    \psfrag{phi}{\textcolor{black}{\scriptsize{$\Phi(t)$}}}
    \psfrag{psi}{\textcolor{black}{\scriptsize{$\Psi(t)$}}}
	\includegraphics[height=.25\textheight,clip]{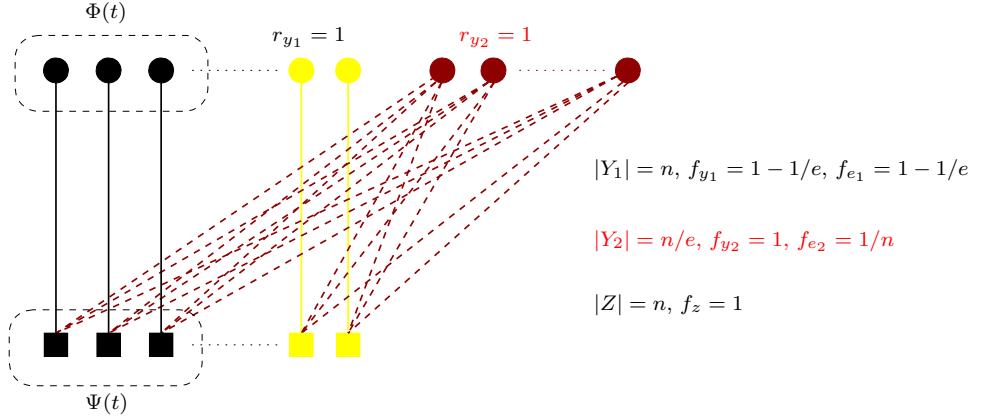}
\caption{Graph structure for the proof of Proposition \ref{prop:hardness_adapt}}
\label{fig:counter_exmaple}
\end{figure*}

On the other hand, consider an arbitrary online algorithm $\alg$; at time $t$, let $\Psi(t) \subseteq \T$ be the set of
full (matched) bins, and $\Phi(t) \subseteq \S_1$ be the set of types
that have a neighbor in $\Psi(t)$. If the $(t+1)$-st ball is of type $\Phi(t)$, it is impossible for $\alg$ to match this ball. Thus:
\begin{align}
|\Psi(t+1)| \leq & |\Psi(t)|
\label{eq:psi_recursion}
+ \I{\tm{$t+1^{st}$ ball is not of type $\Phi(t)$}}
\end{align}
Observe that,
\begin{align} \label{eq:prob_Phi}
\P{\tm{$t+1^{st}$ ball is not of type $\Psi(t)$}} = 1-\frac{ |\Psi(t)|}{n(1+1/e)}.
\end{align}
Note that $b = n(1+1/e)$ balls will arrive, thus $\E{\alg} = \E{|\Psi(n(1+1/e))|}$. Taking expectations from both sides
of \eqref{eq:psi_recursion} and using \eqref{eq:prob_Phi} result in:
\[
\E{\alg} \leq n(1+1/e) \times (1-1/e) = (1-\frac{1}{e^2}) n,
\]
which proves the claim of the proposition.
\end{proof}


Our last and probably most interesting example is for general online algorithms, under arbitrary rates. In this example, we use calculations on the size of perfect matchings in random bipartite graphs studied earlier in the context of Random SAT and cuckoo hashing \cite{dgmmpr09,fm09,fk09}.

For a set $Z$ of bins, define $Y_k$ to be a set of ${|Z|\choose k}$ vertices, each connected to a distinct subset of cardinality $k$ of $Z$. These sets will play an important role in constructing examples with large competitive ratio. Let us start with a simple example. Consider an instance of online stochastic matching where $\S=\S_3$, $|\T|=n$. Also suppose that all the rates are equal  and  $b=0.9n$, i.e.  the rate of each ball $r_\s=n/{0.9n\choose 3}$.

From the perspective of the algorithm, we will have a sequence of $0.9n$ arriving balls each connected to three bins chosen independently and uniformly at random. Because of that, all the empty bins are  equivalent; thus the online algorithm can assign the arriving ball to any of its unoccupied neighbors, if there is any.  Similar to the proof of Proposition \ref{prop:hardness_adapt}, let $\Psi(t)\subseteq \T$ be the set of full bins at time $t$,  and $\Phi(t)\subseteq Y$ be the set of types of balls that have no neighbor in $\T\setminus \Psi(t)$ at time $t$. Note that if the $(t+1)$-st ball is of type $\Phi(t)$, it is impossible for any online algorithm to match it. Note that:
\begin{eqnarray*}
\label{eq:cuckoo_t3}
\P{\tm{$t+1^{st}$ ball is not of type $\Phi(t)$}} = 1-\frac{|\Phi(t)|}{{n\choose 3}}
 = 1-\frac{{{|\Psi(t)|}\choose 3}}{{n\choose 3}}.
\end{eqnarray*}
Thus we can simply write a recurrence relation to compute the expected performance of the online algorithm.

The more difficult part is to compute the optimum solution. The optimum offline algorithm will essentially find the maximum matching between all arrived ball types and the bins. The size of this maximum matching is studied by Path and Rodler \cite{pr04}. There, the problem is defined as follows: there are $b$ keys to be hashed into $n$ buckets, each capable of holding a single key. Each key has $k\geq 2$ (distinct) associated buckets chosen uniformly at random and independently of the
choices of other keys. A hash table can be constructed successfully if each key can be placed into one
of its buckets.

Define  $c^*_k$ to be the threshold such that if $b/n < c^*_k$ and $n$ is large enough, the resulting bipartite graph has a matching of size $b$.
There has been extensive effort to compute $c^*_k$ \cite{fm09, fk09, dgmmpr09}. In particular, it has been shown that $c^*_3 > 0.91$. Therefore, we can argue that if $b/|\T|=0.9 <c^*_3$ then the optimum can match all of the balls with high probability. Dietzfelbinger et al. \cite{dgmmpr09} considered an irregular version of the cuckoo hashing, where the number of choices corresponding to a key is a random variable depending on the key. In particular, they considered the case where a key has 2 choices with probability 1/2 and 3 choices with probability 1/2 (say $2.5$ choices in average), and they defined the number $c^*_{2.5}$ similarly. Interestingly, they show that $c^*_{2.5}\simeq 0.81034$ which is much larger than $c^*_2$.

In the next proposition we use a combination of the irregular cuckoo hashing idea and the idea of the proof of Proposition \ref{prop:hardness_adapt} (adding the type $\S_n$) to obtain a better upper bound on the performance of optimal online algorithms.

\begin{proposition}
\label{prop:best_example}
There is an instance of the online stochastic matching problem for which no algorithm can achieve a competitive ratio better than $0.823$.
\end{proposition}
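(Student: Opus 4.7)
The plan is to combine the irregular cuckoo-hashing idea introduced just before the proposition with the "universal-bin" trick from the proof of Proposition \ref{prop:hardness_adapt}. I would fix $|\T|=n$ bins and introduce three groups of stochastic types. The first group $\S_2$ consists of one ball type per unordered pair of bins, each with negligible individual rate but with aggregate rate $\alpha n$; the second group $\S_3$ is analogous for $3$-subsets of bins, with aggregate rate $\gamma n$; the third group $\S_n$ consists of universal types, each connected to every bin in $\T$, with aggregate rate $\beta n$. From the algorithm's viewpoint, each incoming ball from $\S_2 \cup \S_3$ then has exactly the distribution of a key in the irregular cuckoo-hashing instance of \cite{dgmmpr09}, while the $\S_n$ balls play the role of the saturating type in Proposition \ref{prop:hardness_adapt}.

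To lower-bound $\E{\opt}$, I would invoke the threshold result of \cite{dgmmpr09}: whenever the load $\alpha+\gamma$ and the $2$-versus-$3$ mixture ratio $\gamma/(\alpha+\gamma)$ lie below the corresponding irregular cuckoo threshold (close to $c^*_{2.5}\simeq 0.81034$ for the symmetric mixture), the offline optimum can, with high probability, match every $\S_2 \cup \S_3$ ball, saturating $(\alpha+\gamma)(1-o(1))n$ bins. The remaining $\approx(1-\alpha-\gamma)n$ bins can then be filled using the $\S_n$ balls; taking $\beta$ sufficiently large (say $\beta \geq 1-\alpha-\gamma$ plus a slack) ensures by concentration that $\opt$ covers essentially all empty bins, so $\E{\opt}=(1-o(1))n$. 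For the online upper bound, I would exploit the symmetry of the random $k$-subset neighborhoods: by the principle of deferred decisions all empty bins are exchangeable, so any online algorithm is, up to averaging, equivalent to one that assigns each arriving $k$-ball uniformly at random to an empty neighbor if any exists. Letting $\Psi(t)$ denote the set of full bins at time $t$, a small-rate $k$-ball is unmatchable iff all $k$ of its uniformly random neighbors lie in $\Psi(t)$, an event of conditional probability $\binom{|\Psi(t)|}{k}/\binom{n}{k}$, while a universal ball is unmatchable only when $\Psi(t)=\T$. This yields a recurrence in the spirit of \eqref{eq:psi_recursion} which, in the scaling $\phi(\tau):=\lim_{n\to\infty}\E{|\Psi(\tau n)|}/n$, reduces to the ordinary differential equation
\begin{equation*}
\phi'(\tau) \;=\; 1 \;-\; \frac{\alpha\,\phi(\tau)^2 + \gamma\,\phi(\tau)^3}{\alpha+\gamma+\beta}, \qquad \tau \in [0,\alpha+\gamma+\beta],
\end{equation*}
with $\phi(0)=0$, so that $\E{\alg}/n \to \phi(\alpha+\gamma+\beta)$.

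The main obstacle is the joint numerical optimization over the three parameters $\alpha$, $\gamma$, $\beta$, subject to the cuckoo-threshold constraint on the $\S_2\cup\S_3$ load and the universal-coverage constraint needed for $\E{\opt}=(1-o(1))n$. Solving the ODE explicitly for each admissible triple and minimizing the endpoint value $\phi(\alpha+\gamma+\beta)$ produces the upper bound $\E{\alg}/\E{\opt}\leq 0.823$; the two key inputs are the irregular threshold $c^*_{2.5}$ from \cite{dgmmpr09} (which dictates how aggressively one may pack $\S_2\cup\S_3$ balls without losing the $\E{\opt}=(1-o(1))n$ guarantee) and the exchangeability argument that reduces an arbitrary adaptive online strategy to the tractable ODE above.
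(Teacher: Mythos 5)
Your construction and argument are essentially the paper's own proof: the same three-group instance ($\S_2$, $\S_3$, and the universal type $\S_n$), the same use of the irregular cuckoo-hashing threshold $c^*_{2.5}$ of \cite{dgmmpr09} to show $\E{\opt}=(1-o(1))n$, and the same unmatchable-ball recurrence for $|\Psi(t)|$ bounding every online algorithm, evaluated numerically to get $0.823$. The only cosmetic differences are that the paper fixes the symmetric mixture $m=\tfrac12 c^*_{2.5}n$ (so $b=n$) rather than optimizing over $(\alpha,\gamma,\beta)$, and controls the recurrence via Jensen's inequality in discrete time instead of passing to the fluid-limit ODE; your exchangeability reduction is not actually needed, since the recurrence upper-bounds any algorithm directly.
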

\begin{proof}
Let $\S=\S_2 \cup \S_3 \cup \S_n$, $|\T|=n$; note that $\S_n$ and $\T$ form a complete bipartite graph. Suppose in expectation we throw $m:=1/2c^*_{2.5}n$ balls of types in $\S_2$, $m$ of types in $\S_3$ and $n-2m$ of type in $\S_n$. Therefore, we have $b=n$, and $r_\s= m/{n\choose 2}$  for $\s \in \S_2$,  $r_\s= m/{n\choose 3}$ for $\s \in \S_3$, and $r_\s = n-2m$ for $\s \in \S_n$.
The optimum offline solution would first match the balls of types in $\S_2$ and $\S_3$, and because the expected number of these balls is at most $c^*_{2.5}n$,  it can match all of them with high probability. Then, it matches all the balls of type $\S_n$ to the unoccupied bins. Therefore $\E{\opt}=n$. Let \alg ~be an online algorithm and let $\Psi(t)$ and $\Phi(t)$ be defined as above.
 Similar to the equation \eqref{eq:cuckoo_t3} we can compute the probability that an incoming ball can be matched by $\alg$. Note that if a ball of types in $\S_n$ arrives the online algorithm can always match it through the complete graph; on the other hand, if a ball of type $\S_2$ or $\S_3$ arrives it can only be matched if it has at least one neighbor in $\T\setminus \Psi(t)$. Note that:
 \begin{eqnarray*}
\P{\tm{the type of $t+1^{st}$ ball is not   in $\Phi(t)$}}
=1-\frac{m}{n} \left[\frac{{{|\Psi(t)|}\choose 2}}{{n\choose 2} } +\frac{{{|\Psi(t)}|\choose 3}}{{n\choose 3}}\right]
\end{eqnarray*}
Therefore, we have
\begin{eqnarray*}
\E{|\Psi(t+1)|} &\leq & ~\E{|\Psi(t)|} + 1
-\frac{m}{n} \E{\frac{{{|\Psi(t)|}\choose 2}}{{n\choose 2}} +\frac{{{|\Psi(t)}|\choose 3}}{{n\choose 3}}}\\
&\leq& \E{|\Psi(t)|} + 1
-\frac{m}{n} \left[\frac{{{\E{|\Psi(t)|}}\choose 2}}{{n\choose 2}} +\frac{{\E{{|\Psi(t)}|}\choose 3}}{{n\choose 3}}\right],
\end{eqnarray*}
where the last inequality follows from Jensen's inequality. One can numerically compute $\E{|\Psi(n)|}$ and show that $\E{|\Psi(n)|} \leq 0.823n$ for $n>1000$. Thus for $n>1000$, we have:
$$
\E{\alg} \leq \E{|\Psi(n)|} \leq 0.823n,
$$
which implies that the approximation ratio of the online algorithm is at most 0.823.
\end{proof}

\section{Discussion}
We should also point out that competitive analysis is not the only possible or necessarily the most suitable approach for this problem. Because the distribution from which the input is generated is known, one can use dynamic programming (or enumeration of future events) to derive the optimal allocation policy. Unfortunately, the dynamic programming approach takes exponential time. In fact, one can show that the problem of computing the optimal allocation policy in NP-hard.  We leave it as an open problem whether it is possible to come up with a polynomial-time algorithm with an approximation guarantee that is better than the best possible competitive ratio for this problem or the competitive ratio that we obtain here.


\bibliographystyle{abbrv}
\bibliography{references}

\end{document}